\newtheorem{theorem}{Theorem}[section]
\newtheorem{proposition}[theorem]{Proposition}
\newtheorem{definition}[theorem]{Definition}
\renewcommand{\epsilon}{\varepsilon}
\DeclareMathOperator{\One}{\mathbbm{1}}
\renewcommand{\vec}{\mathbf}
\renewcommand{\norm}[1]{\lVert {#1} \rVert}
\shorttitle{Hydrodynamic Control of Microparticles}
\title{Control of Microparticles Through Hydrodynamic Interactions}
\author{Henry Shum\aff{1}
    \corresp{\email{henry.shum@uwaterloo.ca}},
 Marta Zoppello\aff{2},
 Michael Astwood\aff{3},
 \and Marco Morandotti\aff{2}}
\affiliation{\aff{1}Department of Applied Mathematics, University of Waterloo,
Waterloo, Ontario, 
Canada N2L 3G1
\aff{2}Department of Mathematical Sciences ``G.~L.~Lagrange'', Politecnico di Torino, Corso Duca degli Abruzzi, 24, 10129, Torino, Italy
\aff{3}Department of Physics and Astronomy, University of Waterloo,
Waterloo, Ontario, 
Canada N2L 3G1}
\begin{document}

\maketitle

\abstract{The controllability of passive microparticles that are advected with the fluid flow generated by an actively controlled one is studied. The particles are assumed to be suspended in a viscous fluid and well separated so that the far-field Stokes flow solutions may be used to describe their interactions. Applying concepts from geometric control theory, explicit moves characterized by a small amplitude parameter~$\varepsilon$ are devised to prove that the active particle can control one or two passive particles. The leading-order (in~$\varepsilon$) theoretical predictions of the particle displacements are compared with those obtained numerically and it is found that the discrepancy is small even when $\varepsilon\approx 1$. These results demonstrate the potential for a single actuated particle to perform complex micromanipulations of passive particles in a suspension. 
}

\section{Introduction}

Manipulation of microparticles suspended in fluids has relevance to several applications, including targetted drug delivery~\citep{nelson_microrobots_2010, li_micro/nanorobots_2017,ezike_advances_2023}, environmental remediation~\citep{wang_catalytic_2016}, cell sorting~\citep{bhagat_microfluidics_2010, wang_size-sensitive_2011}, assisted fertilization~\citep{fishel_micromanipulation_1993}, and microassembly~\citep{ghadiri_microassembly_2012, agnus_robotic_2013}.  

Some common mechanisms for transporting large collections of particles in microfluidic devices are using pressure-driven fluid flow along channels, electrokinetic effects, and acoustic streaming~\citep{chakraborty_microfluidic_2010, wu_acoustofluidic_2019}. It has also been shown that spatially and temporally patterned fluid flows can be generated in microfluidic chambers through buoyancy and electrokinetic effects associated with chemical reactions~\citep{sengupta_self-powered_2014, ortiz-rivera_convective_2016, niu_microfluidic_2017, shum_flow-driven_2018} or by harnessing bacterial or artificial cilia carpets~\citep{darnton_moving_2004, Kim2015}. The motion of individual particles subject to these effects could be dependent on the size or other properties of the particle so there is some control over at least the direction and speed of transport, but these mechanisms are not typically used for fine manipulation of individual particles along specific, arbitrary paths. Instead, techniques such as optical tweezers~\citep{ghadiri_microassembly_2012, bradac_nanoscale_2018}, micropipettes~\citep{zhang_multi-functional_2024}, and externally applied magnetic fields can be used for precise manipulation of individual particles~\citep{khalil_wireless_2012}. 

Optical tweezers are particularly useful and popular for holding and moving cells and other microparticles. This method uses focused laser beams that exert optical forces on particles, preventing a particle from deviating from the center of a trap~\citep{polimeno_optical_2018, bunea_strategies_2019, jamil_optical_2022}. Multiple beams can be formed to trap and move multiple particles simultaneously but this requires more complicated experimental procedures, making it inconvenient as a method for controlling many particles. Moreover, 
optical tweezers face challenges and limitations from heating effects, the dependence on the size and dielectric properties of the particle being trapped, and degraded laser focus when particles are located deeper in the fluid, far from the bounding glass surface~\citep{melzer_fundamental_2018, jamil_optical_2022, malinowska_introduction_2024}.

In the current work, we explore a strategy for manipulating passive particles suspended in a viscous fluid relying on hydrodynamic interactions with a set of active particles directly controlled by other means.
For example, the active particles could be controlled by optical tweezers or externally imposed magnetic forces~\citep{khalil_wireless_2012}.
This would allow us to use a minimal controlling apparatus to manipulate passive particles.
Exploiting the hydrodynamic interaction has several advantages: forces propagate throughout the whole fluid and influence particles that are far away from one that is moving (or being moved), overcoming both the spatial limitation of optical tweezers and the need to have one tweezer for each particle to be displaced.
Another advantage is that, in the far-field approximation, the size of the passive particles does not enter the equations of motion, so, in principle, one can control arbitrarily large particles (provided other modelling assumptions remain valid).

The general setup for our model is that a collection of neutrally buoyant particles is suspended in an incompressible Newtonian fluid. Since we are primarily interested in microfluidic systems, it is natural to adopt the equations of incompressible Stokes flow to characterize the behavior of the fluid in low Reynolds number applications. For particles separated by a distance $r=100$\,\textmu{}m and moving with characteristic speeds $U=100$\,\textmu{}m\,s$^{-1}$ in a fluid with the kinematic viscosity $\nu = 1$\,mm$^2$\,s$^{-1}$ (comparable to that of water at 20$^\circ$C), the Reynolds number is $\Rey =Ur/\nu=0.01$, for example. 

We consider the case where one of these particles is directly controlled by external forces, so that its velocity and position are prescribed functions of time, and the remaining particles move passively in the fluid flow generated by the actively controlled one. We use the far-field expressions for the Stokes flow field produced by a translating rigid, spherical particle to determine the velocity and trajectory of passive particles in the fluid. 

Using explicit constructions inspired by the work of~\cite{dalmaso}, we prove total controllability of systems consisting of one active and either one or two passive particles. That is, such particles can be moved from arbitrary initial positions to arbitrary final positions in unbounded three-dimensional space, provided that the particles are far apart from each other in these configurations so that the far-field approximation is valid.

Controllability of a single passive particle by one active particle was proved by~\cite{GaffneyMoreau} using abstract tools from geometric control theory (see, e.g.,~\cite{Agrachev2004}). In particular, they showed that the Lie brackets generated by the vector fields controlling the velocity of the active particle spanned the full six-dimensional configuration space for one active and one passive particle: controllability follows owing to the Rashewsky--Chow Theorem~\cite[Theorem~5.9]{Agrachev2004}.

We provide, in contrast, a strategy that breaks down the desired displacements into a sequence of steps that can be achieved by iteratively applying elementary moves, in which a passive particle moves in the radial direction or in the polar direction around the active particle, for example. We extend the controllability result to two passive particles, proposing a  separate strategy for this case. Using numerical solutions, we test the accuracy of asymptotic expressions for displacements predicted for our elementary moves and also assess the errors associated with adopting the far-field hydrodynamic approximation. 

Our results are a step toward the more general problem of independently manipulating an arbitrary number of passive particles using a small number of control variables. Although the strategies we discuss in the current work do not readily extend to larger numbers of particles, the framework can be generalized to describe such systems in a straight-forward manner.

\smallskip

The paper is organized as follows: in Section~\ref{sec:math_form}, we present the mathematical formulation of the problem; in Section~\ref{sec:moves}, we construct the elementary and compound moves that will be used in Section~\ref{sec:controllability_M_1_2} to prove controllability of our system of one active particle and one or two passive ones.
In Section~\ref{sec:errors}, we discuss the errors due to finite amplitudes and separations in comparison to the theoretically predicted trajectories.
Finally, in Section~\ref{sec:conclusions}, we offer an overview of the results we obtained and an outlook for future research.

\section{Mathematical formulation}\label{sec:math_form}
We introduce the general setup for a system of spherical particles immersed in a viscous fluid. Of these, one is an \emph{active particle}, whose velocity is directly prescribed, and the remaining~$M$ are \emph{passive particles}, whose motions are determined by the interaction with the active one.

We let $\rho_0>0$ be the radius of the active particle and we denote by $t\mapsto\vec{x}(t)$ its position in space at time $t$ and by $t\mapsto\dot{\vec{x}}(t)$ its velocity at time $t$. 
Analogously, we denote by $\rho_j>0$, $t\mapsto\vec{y}_j(t)$, and $t\mapsto\dot{\vec{y}}_j(t)$, for every $j=1,\ldots,M$, the radius, position, and velocity, respectively, of the $j${th} passive particle. 

Assuming that the Reynolds number is small enough that inertial effects may be neglected, the fluid flow is governed by the equations of incompressible Stokes flow, 
\begin{equation}\label{eq:stokes}
    \nabla p -\mu \nabla^2\vec{u}=\vec{0}, \qquad \nabla\cdot \vec{u} = 0,
\end{equation}
where $\vec{u}$ is the velocity field, $p$ is the pressure field, and $\mu$ is the dynamic viscosity of the fluid. We assume that the velocity field vanishes at infinity and satisfies the no-slip boundary conditions on the surfaces of the particles, namely,
\begin{equation*}
    \vec{u}(\vec{z}) = \begin{cases}
        \vec{U}_0 + \vec{\Omega}_0\times (\vec{z} - \vec{x}), & \text{for } \lVert\vec{z}-\vec{x}\rVert = \rho_0,\\
        \vec{U}_j + \vec{\Omega}_j\times (\vec{z} - \vec{y}_j), & \text{for } \lVert\vec{z}-\vec{y}_j\rVert = \rho_j,\; j= 1,\ldots,M,
   \end{cases} 
\end{equation*}
where $\vec{U}_j$ and $\vec{\Omega}_j$ ($j=0,\ldots,M$) are the translational and rotational velocities, respectively, of the active and passive particles. 
We denote by $t\mapsto\vec{F}(t)$ the force that the active particle exerts, at time $t$, on the surrounding fluid and we assume that all passive particles are force-free. All particles, whether active or passive, are torque-free.

By linearity of the equations of Stokes flow, the relationship between active forces and the velocities of the particles, in the absence of background flows, are generically described by~\citep{Kim1991}
\begin{equation*}
    \begin{pmatrix}
        \vec{U}_j \\ \vec{\Omega}_j
    \end{pmatrix} = 
    \begin{pmatrix}
        \vec{M}_{j} \\ \vec{N}_{j}
    \end{pmatrix} \vec{F},\quad\text{for $j=0,\ldots,M$.}
\end{equation*}
The quantities $\vec{M}_{j}$ ($j=0,\ldots,M$) are the mobility tensors for the translational velocities of the active and passive particles, respectively, due to the force on the active particle, and $\vec{N}_{j}$ ($j=0,\ldots,M$) are the mobility tensors for the rotational velocities of the active and passive particles, respectively, due to the force on the active particle. 
In general, the mobility tensors depend on the relative positions of all particles and it is not possible to obtain a closed-form expression for them. 
By symmetry of the spherical particles, the mobility tensors are independent of the orientations of the particles. We focus on the problem of controlling the particle positions without regard to their orientations, hence, the rotational velocities need not be considered.

Let $\vec{d}_{j}\coloneqq \vec{y}_j-\vec{x}$ ($j=1,\ldots,M$) be the displacement vector of the $j${th} passive particle from the active one.
Assuming that all of the pairwise distances $r_{j}\coloneqq \lVert\vec{d}_{j}\rVert$ 
are large compared with all of the particle radii $\rho_0$ and $\rho_j$, as well as the mutual distances between the passive particles, we use the far-field approximation for the translational mobility tensors~\citep{zuk_rotneprageryamakawa_2014, Graham2018} given by
\begin{equation}\label{MM02}
\vec{M}_{j}=\begin{cases}
    \displaystyle \frac{1}{6\pi\mu \rho_0}\One, & \text{for $j=0$},\\[2ex] 
    \displaystyle \bigg(1+\frac{\rho_0^2+\rho_j^2}{6}\nabla^2\bigg)\mathcal{G}(\vec{d}_j), & \text{for $j=1,\ldots,M$,}
\end{cases}
\end{equation}
where the function $\mathcal{G}\colon\mathbb{R}^3\setminus\{\vec{0}\}\to\mathbb{R}^{3\times3}_{\mathrm{sym}}$ defined by
\begin{equation*}
\mathcal{G}(\vec{d}) \coloneqq \frac{1}{8\pi \mu}\left(\frac{\One}{r}+\frac{\vec{d}\otimes \vec{d}}{r^3}\right)
\end{equation*}
is the Stokeslet Green's function for the Stokes equation~\eqref{eq:stokes} with a singular force $\vec{F}\delta(\vec{d})$ applied at the origin. Equation \eqref{MM02} is accurate up to $O(r_j^{-3})$ and can be extended to higher orders of accuracy by the method of reflections~\citep{Kim1991}.
Note that, to this order of accuracy, the passive particles do not affect the velocities of the active particles. 

The matrix $\vec{M}_0$ is evidently invertible and its inverse $\vec{R}_0=\vec{M}_0^{-1}=6\pi\mu\rho_0\One$ is the resistance matrix describing the linear relationship between forces applied to the fluid and the velocities of the particles, $\vec{F} = \vec{R}_0\vec{U}_0$. 
The equations of motion for our system of active and passive particles are then
\begin{equation}\label{MM01}
\begin{cases}
\dot{\vec{x}}(t)=\vec{u}(t), & 
\\[2mm]
\displaystyle \dot{\vec{y}}_j(t)= \vec{M}_{j}(t)\vec{F}(t) =  \overline{\vec{M}}_{j}(t)\vec{u}(t),
& \text{for $j=1,\ldots,M$,}
\end{cases}
\end{equation} 
where $\overline{\vec{M}}_{j}(t) = \vec{M}_{j}(t) \vec{R}_0(t)=6\pi\mu\rho_0\vec{M}_j(t)$. 
We further simplify the equations by retaining only the leading order terms, namely, those of order $O(r_j^{-1})$. Notice that, with this approximation, the radii of the passive particles do not enter the system of equations, which becomes
\begin{equation}\label{MM05}
\begin{cases}
\dot{\vec{x}}(t)=\vec{u}(t), \\[2mm]
\displaystyle \dot{\vec{y}}_j(t)=\frac{3a}{4}\bigg(\frac{\One}{r_{j}(t)}+\frac{\vec{d}_{j}(t)\otimes \vec{d}_{j}(t)}{r_{j}(t)^3}\bigg)\vec{u}(t) = \vec{G}(\vec{d}_j(t))\vec{u}(t),
& \text{for $j=1,\ldots,M$,}
\end{cases}
\end{equation}
where $\vec{G}(\vec{d}) \coloneqq \frac{3a}{4}\left(\frac{\One}{r}+\frac{\vec{d}\otimes \vec{d}}{r^3}\right)$ and $a = \rho_0$. We remark that the passive particles move as tracers or point particles in the flow field induced by the moving active particles.
Equation \eqref{MM05} can be written in matrix form as 
\begin{equation}\label{MM07}
\begin{pmatrix}
\dot{\vec{x}} \\ \dot{\vec{y}}_1 \\ \vdots \\ \dot{\vec{y}}_M
\end{pmatrix}=\vec{H}\vec{u},\quad \text{where}\quad \vec{H}=\begin{pmatrix}
\One_{3\times3} \\
\vec{G}_{1} \\
\vdots \\
\vec{G}_{M}\end{pmatrix}, \quad \vec{G}_{j} = \vec{G}(\vec{d}_j), \text{ for } j=1,\ldots,M.
\end{equation}

\section{Elementary and compounds moves}\label{sec:moves}
In this section, we construct the elementary and compound moves, which are the building blocks for proving the controllability results in Section~\ref{sec:controllability_M_1_2}.

\subsection{Elementary moves and Lie brackets of vector fields}
\label{subsec:Lie_brackets}
We describe three elementary classes of control functions from which we will construct strategies to move the active and passive particles from arbitrary initial positions to arbitrary target positions. 
Since the equations for the passive particles are decoupled, \emph{i.e.}, the velocity of the $j$th passive particle does not depend on the $i$th passive particle with $i\neq j$, the action of the active particle is the same on all the passive particles.
For this reason, in what follows, the three elementary classes of control functions will be described for the case $M=1$ in equations~\eqref{MM05} and~\eqref{MM07}.
Let $\vec{h}_k\in\mathbb{R}^{6}$ be the $k$-th column of $\vec{H}$ in~\eqref{MM07}, for $M=1$ and $k=1,2,3$. The first three components of $\vec{h}_k$ represent the velocity of the active particle and the last three components represent the velocity of the passive particle when the control $\vec{u}=\vec{e}_k$ is applied. 

\subsubsection{The zeroth-order control\label{sec:direct_control}}
Consider the zeroth-order (constant) control 
\begin{equation*}
    \vec{u}(t) = \vec{u}_{\Delta t}^{\epsilon,\,\vec{h}_k}(t) \coloneqq\begin{cases}
        \displaystyle \frac{\epsilon}{\Delta t} \vec{e}_k\,, & 0\leq t < \Delta t,\\[2mm]
        0,& \text{otherwise.}        
    \end{cases}
\end{equation*}
The net displacements of the active and passive particles over the time interval $[0,\Delta t]$ with initial positions $\vec{x}(0)=\vec{x}^\circ$ and $\vec{y}(0)=\vec{x}^\circ+\vec{d}$ (with $\vec{d}\in\mathbb{R}^3\setminus\{\vec{0}\}$), respectively, are
\begin{equation}\label{formula11}
\begin{split}
    \Delta^{\epsilon,\,\vec{h}_k}(\vec{d}) = \begin{pmatrix}
        \Delta_\vec{x}^{\epsilon,\,\vec{h}_k}(\vec{d})\\ \Delta_\vec{y}^{\epsilon,\,\vec{h}_k}(\vec{d})
    \end{pmatrix}\coloneqq &
    \begin{pmatrix}
    \vec{x}(\Delta t)-\vec{x}(0)\\
        \vec{y}(\Delta t)-\vec{y}(0)
    \end{pmatrix} = \begin{pmatrix}
        \epsilon\, \vec{e}_k \\
        \frac{3a\epsilon}{4r}\left(\vec{e}_k+\frac{d_k \vec{d}}{r^2}\right) + O(\epsilon^2)
    \end{pmatrix} \\
    = &  \,
    \epsilon \vec{h}_k(\vec{d}) + \begin{pmatrix}
        \vec{0}\\ O(\epsilon^2)
    \end{pmatrix}
\end{split}
\end{equation}
as $\epsilon \to 0$.
This zeroth-order control is primarily used to control the position of the active particle since its velocity is directly prescribed by the control. The passive particle will also move, due to the flow field generated by the translating active particle and we can compute the trajectory of the passive particle according to \eqref{formula11}, see Figure~\ref{fig:vectorfields}(a).

To displace the passive particle without any net displacement of the active particle, we construct higher-order controls.
\begin{figure}
\centering
    \caption{Streamline plots of the displacement vector field of a passive particle, given by components 4--6 of the Lie brackets, for (a) the $\vec{h}_1$ zeroth-order control,
    (b) the $[\vec{h}_1,\vec{h}_2]$ first-order control,
    (c) the $x$--$y$ plane of the $[\vec{h}_2,[\vec{h}_1,\vec{h}_2]]$ second-order control, and
    (d) the $x$--$z$ plane of the $[\vec{h}_2,[\vec{h}_1,\vec{h}_2]]$ second-order control. In all cases, we use an active particle of radius $a=1$ and the plots are in the same plane as the active particle, which is located at the origin. By symmetry, there are no out-of-plane components of displacements for the passive particle
    The color scale indicates the base 10 logarithm of the magnitude.}
    \label{fig:vectorfields}
\end{figure}

\subsubsection{The first-order control\label{sec:rotation_control}}
Consider a control that moves the active particle around a square loop with sides of length $\epsilon$ in the $\vec{e}_k$ and $\vec{e}_l$ directions, namely,
\begin{equation}
 \vec{u}_{\Delta t}^{\epsilon,\, [\vec{h}_k,\vec{h}_l]}(t) \coloneqq
    \vec{u}_{\Delta t/4}^{\epsilon,\, \vec{h}_k}(t) + 
    \vec{u}_{\Delta t/4}^{\epsilon,\, \vec{h}_l}(t-\tfrac{\Delta t}4) -
    \vec{u}_{\Delta t/4}^{\epsilon,\, \vec{h}_k}(\tfrac{3\Delta t}4-t) -
    \vec{u}_{\Delta t/4}^{\epsilon,\, \vec{h}_l}(\Delta t - t),
\label{eq:squarecontrol}
\end{equation}
for $k,l=1,2,3$. The inversion of the time variable in the last two terms represents performing the reverse control of the first two terms in the sum. The controls here are constant over each subinterval but in later examples, it will be necessary to make this distinction. Explicitly, the function in \eqref{eq:squarecontrol} can be expressed as
\begin{equation*}
\vec{u}_{\Delta t}^{\epsilon,\, [\vec{h}_k,\vec{h}_l]}(t) =
\begin{cases}
        \displaystyle\frac{4\epsilon}{\Delta t}\vec{e}_k, & \displaystyle0\leq t < \frac{\Delta t}4,\\[2mm]
        \displaystyle\frac{4\epsilon}{\Delta t}\vec{e}_l, & \displaystyle\frac{\Delta t}4 \leq t < \frac{\Delta t}2,\\[2mm]
        \displaystyle-\frac{4\epsilon}{\Delta t}\vec{e}_k, & \displaystyle\frac{\Delta t}2 \leq t < \frac{3\Delta t}4,\\[2mm]
        \displaystyle-\frac{4\epsilon}{\Delta t}\vec{e}_l, & \displaystyle\frac{3\Delta t}4 \leq t < {\Delta t},\\[2mm]
        0, & \text{otherwise.}        
    \end{cases}
    \end{equation*}
The net displacements, for small $\epsilon$, are 
\begin{equation}
    \Delta^{\epsilon,\,[\vec{h}_k,\vec{h}_l]}(\vec{d}) = \begin{pmatrix}
        \Delta_\vec{x}^{\epsilon,\,[\vec{h}_k,\vec{h}_l]}(\vec{d})\\ \Delta_\vec{y}^{\epsilon,\,[\vec{h}_k,\vec{h}_l]}(\vec{d})
    \end{pmatrix} = 
    \epsilon^2 [\vec{h}_k, \vec{h}_l](\vec{d}) + \begin{pmatrix}
        \vec{0}\\ O(\epsilon^3)
    \end{pmatrix},
    \label{eq:first-order_displacement}
\end{equation}
where $[\vec{h}_k,\vec{h}_l] \coloneq \vec{h}_k\cdot\nabla\vec{h}_l - \vec{h}_l\cdot\nabla\vec{h}_k$ is the first-order Lie bracket of the vector fields $\vec{h}_k$ and $\vec{h}_l$. The Lie bracket evaluates to
\begin{equation*}
    [\vec{h}_k,\vec{h}_l](\vec{d}) = \begin{pmatrix}
        \vec{0}\\
        \frac{3a}{2r^3}\left(1-\frac{9a}{8r}\right)(d_k\vec{e}_l - d_l\vec{e}_k)
    \end{pmatrix} = 
    \begin{pmatrix}
        \vec{0}\\
        \pmb{\upomega}\times\vec{d}
    \end{pmatrix},
\end{equation*}
where $\pmb{\upomega}(r) = \frac{3a}{2r^3}\left(1-\frac{9a}{8r}\right)\vec{e}_k\times\vec{e}_l$ and $\vec{0}$ is the three-dimensional zero vector. Hence, for $k\neq l$ and small $\epsilon$, the control $\vec{u}_{t_0,\, \Delta t}^{\epsilon,\,[\vec{h}_k,\vec{h}_l]}$ results approximately in a rotation of the passive particle by an angle 
\begin{equation}\label{Deltatheta}
\Delta\theta^\varepsilon(r) = \frac{3a\epsilon^2}{2r^3}\left(1-\frac{9a}{8r}\right)
\end{equation}
about the axis passing through the active particle and perpendicular to $\vec{e}_k$ and $\vec{e}_l$. By construction, the active particle returns to its initial position, $\vec{x}(\Delta t)=\vec{x}(0)$. If $k=l$, then the net displacements are exactly zero as this is a time-reciprocal motion. 

The interpretation of this result is that a particle forced to move around in a closed, square loop produces a net displacement field that is, to leading order, equivalent to a rotlet, see Figure~\ref{fig:vectorfields}(b). Indeed, the time-averaged distribution of forces applied to the fluid over the interval $[0,\Delta t]$ corresponds to the sum of a Stokeslet dipole with force in the $\vec{e}_l$ direction and displacement in the $\vec{e}_k$ direction and a Stokeslet dipole with force in the $-\vec{e}_k$ direction and displacement in the $\vec{e}_l$ direction. 

We refer to the control $\vec{u}_{\Delta t}^{\epsilon,\, [\vec{h}_k,\vec{h}_l]}$ as the first-order control (and assume that $k\neq l$) since it corresponds to a first-order Lie bracket. 

Since rotations preserve the distance $r$, we require another class of controls: one that generates net displacements of the passive particle in the radial direction, with respect to the active particle, without a net displacement of the active particle.

\subsubsection{The second-order control}\label{sec:translation_control}
Consider second-order control functions of the form $\vec{u}(t) = \vec{u}_{\Delta t}^{\epsilon,\, [\vec{h}_k,[\vec{h}_l, \vec{h}_m]]}(t)$ defined as in~\eqref{eq:squarecontrol}, replacing $\vec{h}_l$ with $[\vec{h}_l,\vec{h}_m]$. Since $[\vec{h}_l, \vec{h}_m]$ corresponds to a rotlet-like flow field if $l\neq m$, the Lie bracket $[\vec{h}_k,[\vec{h}_l, \vec{h}_m]]$ has the approximate form of a rotlet dipole, with axis $(\vec{e}_l\times \vec{e}_m)$ and displacement in the $\vec{e}_k$ direction, acting on the passive particle,
\begin{equation*}
\begin{split}
    [\vec{h}_k,[\vec{h}_l,\vec{h}_m]](\vec{d}) &= \begin{pmatrix}
        \vec{0}\\
        \frac{3a}{2r^3}\left(1-\frac{9a}{8r}\right)(\delta_{mk}\vec{e}_l - \delta_{lk}\vec{e}_m) + \frac{9a}{2r^3}\left(1-\frac{3a}{2r}\right)^2\frac{d_k}{r}\left(\frac{d_l\vec{e}_m - d_m\vec{e}_l}{r}\right) - \frac{9a^2}{8r^4}\left(1-\frac{9a}{8r}\right)\delta_{mk}\left(\vec{e}_l+\frac{d_l\vec{d}}{r^2}\right)
    \end{pmatrix}\\
    &= \begin{pmatrix}
        \vec{0}\\
        \frac{3a}{2r^3}\left\{(\vec{e}_k\times(\vec{e}_l\times\vec{e_m}) +\frac{3d_k(\vec{e}_l\times\vec{e}_m)\times\vec{d}}{r^2}\right\} + O(1/r^4) 
    \end{pmatrix},
    \end{split}
\end{equation*}
see Figure~\ref{fig:vectorfields}(c).

The net displacements of the active and passive particles are given, for small $\varepsilon$, by
\begin{equation*}
    \Delta^{\epsilon,\,[\vec{h}_k,[\vec{h}_l,\vec{h}_m]]}(\vec{d}) = \begin{pmatrix}
        \Delta_\vec{x}^{\epsilon,\,[\vec{h}_k,[\vec{h}_l,\vec{h}_m]]}(\vec{d})\\ \Delta_\vec{y}^{\epsilon,\,[\vec{h}_k,[\vec{h}_l,\vec{h}_m]]}(\vec{d})
    \end{pmatrix} = 
    \epsilon^3 [\vec{h}_k,[\vec{h}_l,\vec{h}_m]](\vec{d}) + \begin{pmatrix}
        \vec{0}\\ O(\epsilon^4)
    \end{pmatrix}.
\end{equation*}
Notice that if, for a given relative position of the passive particle with respect to the active particle, we choose a right-handed reference frame in which $\vec{d} = d_1\vec{e}_1$, then the control corresponding to $[\vec{h}_2,[\vec{h}_1,\vec{h}_2]]$ results in a passive particle displacement 
\begin{equation}
    \Delta_\vec{y}^{\epsilon,\,[\vec{h}_2,[\vec{h}_1,\vec{h}_2]]}(d_1\vec{e}_1) = \frac{3a\epsilon^3}{2r^3}\left(1-\frac{3a}{2r}\right)\left(1-\frac{9a}{8r}\right)\vec{e}_1 + O(\epsilon^4),
    \label{translation_displacement}
\end{equation}
where $r=|d_1|$. To leading order in $\epsilon$, this produces a displacement in the $\vec{e}_1$ direction.

In the more general configuration assuming only that $d_2 = 0$, we have the result that 
\begin{equation}
    \Delta_\vec{y}^{\epsilon,\,[\vec{h}_2,[\vec{h}_1,\vec{h}_2]]}(d_1\vec{e}_1+d_3\vec{e}_3) = \frac{3a\epsilon^3}{2r^3}\vec{e}_1 + O(1/r^4) + O(\epsilon^4),
    \label{translation_displacement2}
\end{equation}
which implies that, to leading order in $1/r$ and $\epsilon$, the displacement of passive particles in the plane $d_2=0$ is purely in the $\vec{e}_1$ direction and the magnitude of the displacement depends on the magnitude but not the direction of the vector $\vec{d}$. This is illustrated in Figure~\ref{fig:vectorfields}(d).

\subsection{Compound moves}\label{sec:compound_moves}

In this section, we establish a set of manipulations that can be performed on a system of one active and two passive particles (i.e., $M=2$), assuming that their motion is governed by system~\eqref{MM05}, using the elementary moves discussed in Section~\ref{subsec:Lie_brackets}. Since this system is based on the far field approximation, we will ensure that the particles always remain well separated, according to the following definition. 

\begin{definition}\label{def:wellseparated}
Let $R>0$ be the minimum separation we wish to maintain between any two particles. We say that an instantaneous configuration of active and passive particles is \emph{well separated}, and denote this by $(\vec{x}(t),\vec{y}_1(t),\ldots,\vec{y}_M(t)) \in \mathcal{S}_R^{1+M}$, if the minimum distance between any two of the particles at time $t$ is greater than $R$. We say that a solution, or trajectory, $(\vec{x},\vec{y}_1,\ldots,\vec{y}_M)\in (AC([0,T];\mathbb{R}^3))^{1+M}$ of system \eqref{MM05}  is \emph{well separated} if the configuration $(\vec{x}(t),\vec{y}_1(t),\ldots,\vec{y}_M(t)) \in \mathcal{S}_R^{1+M}$ for all times $t\in[0,T]$. 

\end{definition}

For brevity, we will not reiterate the well-separated conditions in Propositions~\ref{prop:make_nonequidistant}--\ref{prop:adjust_distances} that follow, but these conditions will be implied in all cases, namely, we will always move the particles from an initial configuration $(\vec{x}^\circ,\vec{y}_1^\circ,\vec{y}_2^\circ)\in\mathcal{S}_R^3$ to a final configuration $(\vec{x}^f,\vec{y}_1^f,\vec{y}_2^f)\in\mathcal{S}_R^3$ ensuring that the particles stay well separated at all times.

\begin{proposition}[Equidistant to non-equidistant configurations]\label{prop:make_nonequidistant}
The particles can be moved from any non-collinear initial configuration $(\vec{x}^\circ,\vec{y}_1^\circ,\vec{y}_2^\circ)$ with the two passive particles equidistant from the active particle, i.e., $\norm{\vec{y}_1^\circ - \vec{x}^\circ} = \norm{\vec{y}_2^\circ - \vec{x}^\circ}$, to a final configuration $(\vec{x}^f,\vec{y}_1^f,\vec{y}_2^f)$ satisfying $\norm{\vec{y}_1^f - \vec{x}^f} \neq \norm{\vec{y}_2^f - \vec{x}^f}$.
\end{proposition}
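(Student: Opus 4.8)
The plan is to use the elementary moves of Section~\ref{subsec:Lie_brackets} to perturb the radial distance of exactly one passive particle, say $\vec{y}_1$, while leaving $\norm{\vec{y}_2 - \vec{x}}$ essentially unchanged, thereby breaking the equidistance. The key observation is that the second-order control $\vec{u}_{\Delta t}^{\epsilon,\,[\vec{h}_2,[\vec{h}_1,\vec{h}_2]]}$ produces, to leading order, a purely radial displacement of a passive particle whose magnitude $\tfrac{3a\epsilon^3}{2r^3}$ depends only on its distance $r$ from the active particle (see~\eqref{translation_displacement} and~\eqref{translation_displacement2}), and that this displacement acts \emph{simultaneously} on both passive particles since the action of the active particle is identical on each. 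If the two passive particles start at the common distance $r^\circ$ but in non-collinear (hence geometrically distinct) directions from the active particle, a single second-order move changes both radial distances by the same leading-order amount, which is not immediately useful. The resolution is to exploit that the radial displacement formula is a strictly monotone function of $r$: after one move both distances change, but if we first move the active particle (via the zeroth-order control of Section~\ref{sec:direct_control}) to a location where it is \emph{closer} to $\vec{y}_1^\circ$ than to $\vec{y}_2^\circ$, then a subsequent second-order move produces a larger radial push on $\vec{y}_1$ than on $\vec{y}_2$, destroying equidistance.

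Concretely, I would proceed in three steps. First, apply a zeroth-order control $\vec{u}_{\Delta t}^{\epsilon,\,\vec{h}_k}$ translating the active particle by a small vector $\pmb{\xi}$ chosen so that $\norm{\vec{y}_1^\circ - (\vec{x}^\circ + \pmb{\xi})} \neq \norm{\vec{y}_2^\circ - (\vec{x}^\circ + \pmb{\xi})}$; such a $\pmb{\xi}$ exists precisely because the three points $\vec{x}^\circ,\vec{y}_1^\circ,\vec{y}_2^\circ$ are non-collinear, so $\vec{x}^\circ$ does not lie on the perpendicular bisector hyperplane of $\vec{y}_1^\circ$ and $\vec{y}_2^\circ$ in a degenerate way, and moving off that bisector is a codimension-one constraint that a generic small $\pmb{\xi}$ violates. (During this translation the passive particles also move by $O(\epsilon)$ amounts, but we simply record their new positions; nothing needs to be equidistant at intermediate stages.) Second, apply a second-order control so that both passive particles are pushed radially outward (or inward) from the current active-particle position; since their current distances $r_1 \neq r_2$ and the push magnitude $\tfrac{3a\epsilon^3}{2r^3}(1 + O(a/r))$ is strictly decreasing in $r$, the new distances differ by an amount bounded below (for small $\epsilon$) by a positive multiple of $\epsilon^3 |r_1 - r_2|$, up to higher-order corrections. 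Third, if desired, apply further zeroth-order and first-order moves (rigid translations of the whole configuration and rotations of the passive particles about the active one, which preserve all distances from the active particle) to steer the configuration to the prescribed target $(\vec{x}^f,\vec{y}_1^f,\vec{y}_2^f)$; but since the Proposition only asserts \emph{existence} of a reachable non-equidistant final configuration, this third step may be omitted and we simply declare the configuration reached after step two to be $(\vec{x}^f,\vec{y}_1^f,\vec{y}_2^f)$.

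Throughout, one must check the well-separated condition of Definition~\ref{def:wellseparated}: since all displacements are $O(\epsilon)$ and the initial configuration lies in the open set $\mathcal{S}_R^3$, choosing $\epsilon$ sufficiently small keeps the entire trajectory in $\mathcal{S}_R^3$, and likewise keeps all the $O(\epsilon^k)$ error terms in~\eqref{formula11}, \eqref{eq:first-order_displacement}, and~\eqref{translation_displacement2} from overwhelming the leading-order effect. The main obstacle I anticipate is making the inequality in step two \emph{robust}: the leading-order radial displacements are equal-looking ($\tfrac{3a\epsilon^3}{2r_j^3}\vec{e}_1$ in the respective local frames) and one has to argue that the difference $\big|\tfrac{3a\epsilon^3}{2r_1^3} - \tfrac{3a\epsilon^3}{2r_2^3}\big|$, which is genuinely nonzero once $r_1\neq r_2$, dominates the $O(\epsilon^4)$ and $O(\epsilon^3/r^4)$ corrections for small $\epsilon$ — this is a routine but slightly delicate ordering-of-magnitudes argument. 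A secondary subtlety is verifying that the non-collinearity hypothesis is exactly what guarantees step one succeeds: if the three points were collinear, every translation $\pmb{\xi}$ along the line keeps the active particle on the bisector's complement in a way that could conspire with the later radial push, so non-collinearity is used precisely to place $\vec{x}^\circ + \pmb{\xi}$ at unequal distances from the two passive particles while still allowing the intermediate passive-particle drift to be controlled.
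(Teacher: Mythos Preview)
Your central claim---that the second-order control $\vec{u}_{\Delta t}^{\epsilon,\,[\vec{h}_2,[\vec{h}_1,\vec{h}_2]]}$ produces a purely \emph{radial} displacement of each passive particle---misreads~\eqref{translation_displacement2}. That formula says the leading-order displacement is $\tfrac{3a\epsilon^3}{2r^3}\vec{e}_1$ for \emph{every} passive particle in the plane $d_2=0$: the direction is the fixed Cartesian vector $\vec{e}_1$, not the particle's own radial direction $\vec{d}/r$. Consequently your assertion that ``a single second-order move changes both radial distances by the same leading-order amount, which is not immediately useful'' is false, and this is precisely what the paper exploits. The paper chooses the frame so that $\vec{d}_1^\circ$ lies along $\vec{e}_3$ while $\vec{d}_2^\circ$ has a strictly positive $\vec{e}_1$-component (this is where non-collinearity is actually used). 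One application of the second-order control then translates both passive particles by the \emph{same} vector $\vec{v}\parallel\vec{e}_1$ while leaving the active particle fixed; since $\vec{v}\perp\vec{d}_1^\circ$ but $\vec{v}\cdot\vec{d}_2^\circ>0$, the increment in $\norm{\vec{d}_1}$ is $O(|\vec{v}|^2)$ whereas the increment in $\norm{\vec{d}_2}$ is $O(|\vec{v}|)$, yielding $\norm{\vec{d}_2^f}>\norm{\vec{d}_1^f}$ in a single step.

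Your three-step detour is therefore built on a misconception and is unnecessary. It can be salvaged, because your Step~1 alone already proves the proposition by a different mechanism: a zeroth-order control moving the active particle by a small $\pmb{\xi}$ changes $\norm{\vec{d}_j}^2$ by $2\bigl(\tfrac{3a}{2r^\circ}-1\bigr)\,\vec{d}_j^\circ\!\cdot\pmb{\xi}+O(|\pmb{\xi}|^2)$ (the passive drift from~\eqref{formula11} contributes the $\tfrac{3a}{2r^\circ}$ term), and these differ for $j=1,2$ whenever $(\vec{d}_1^\circ-\vec{d}_2^\circ)\cdot\pmb{\xi}\neq0$. A generic $\pmb{\xi}$ achieves this with no appeal to non-collinearity, so your perpendicular-bisector justification is also off: equidistance means $\vec{x}^\circ$ lies \emph{on} the bisector, and moving off it only requires $\vec{d}_1^\circ\neq\vec{d}_2^\circ$. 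This alternative is valid but, unlike the paper's one-step argument, displaces the active particle.
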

\begin{proof}
    The goal can be achieved using the second-order control.
    We choose a right-handed orthonormal reference frame $\{\vec{e}_1,\vec{e}_2,\vec{e}_3\}$ in which $\vec{e}_3 = \vec{d}_1^\circ/\norm{\vec{d}_1^\circ}$, $\vec{d}_1$ and $\vec{d}_2$ belong to the span of $\vec{e}_1$ and $\vec{e}_3$\, so that ${d}_{i,2}=0$ for $i=1,2$, and $\vec{e}_1\cdot \vec{d}_2 > 0$ [see Figure~\ref{fig:compound}(a)]. Applying the second-order control $\vec{u}_{\Delta t}^{\epsilon, [\vec{h}_2,[\vec{h}_1,\vec{h}_2]]}$ causes no net displacement of the active particle and displaces each passive particle by the same distance in the $\vec{e}_1$ direction (to leading order), according to 
    \eqref{translation_displacement2}. Moving both passive particles in the $\vec{e}_1$ direction breaks the symmetry and results in a final configuration with $\norm{\vec{d}_2^f} > \norm{\vec{d}_1^f}$. To leading order, the distance between passive particles is unchanged and the distance between each passive particle and the active particle is increased so the configuration remains well separated. 
\end{proof}

\begin{figure}
    \centering
    \includegraphics[scale=0.3]{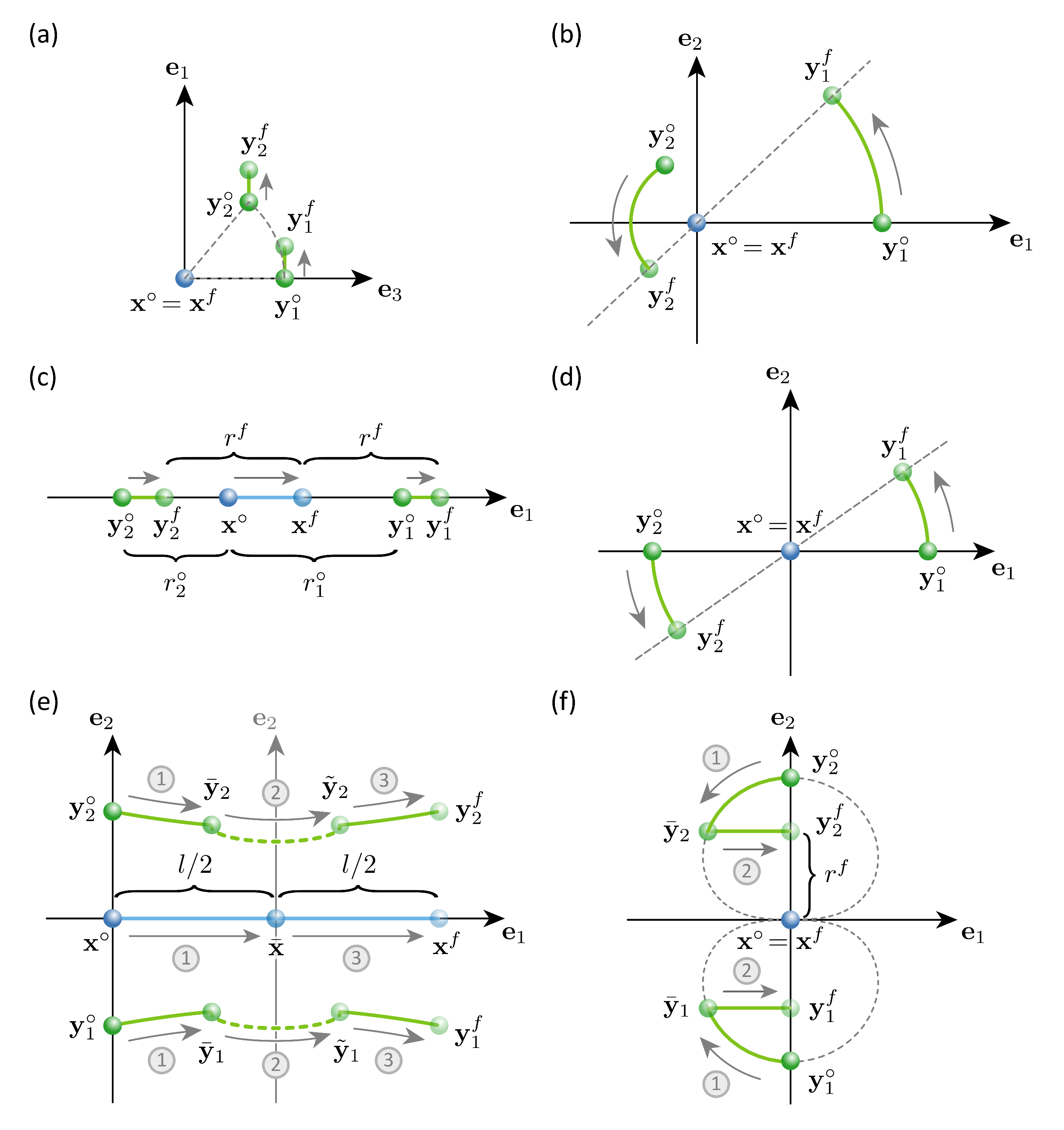}
    \caption{Schematic illustration of the compound moves for two passive particles. The objectives of these moves are to (a) produce a non-equidistant configuration, (b) produce a collinear configuration, (c) make collinear particles equidistant, (d) rotate equidistant collinear particles to a new orientation, (e) translate equidistant collinear particles together, and (f) adjust the distance between equidistant collinear particles. 
    \label{fig:compound}}
\end{figure}

\begin{proposition}[Arbitrary to collinear configurations]\label{prop:make_collinear}
The particles can be moved from any initial configuration $(\vec{x}^\circ,\vec{y}_1^\circ,\vec{y}_2^\circ)$ to a collinear final configuration with the active particle between the two passive particles, i.e., a configuration $(\vec{x}^f,\vec{y}_1^f,\vec{y}_2^f)$ satisfying $\vec{d}_1^f = -\alpha\vec{d}_2^f$ for some $\alpha > 0$.
\end{proposition}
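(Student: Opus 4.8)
The plan is to reduce the general configuration to a collinear one in two stages, using the elementary moves from Section~\ref{subsec:Lie_brackets} and the outcome of Proposition~\ref{prop:make_nonequidistant}. First I would arrange the three particles to be coplanar and handle the degenerate equidistant case. If the initial configuration happens to be equidistant, Proposition~\ref{prop:make_nonequidistant} lets us first move to a non-equidistant configuration (still non-collinear), so from here on I may assume $\norm{\vec{d}_1} \neq \norm{\vec{d}_2}$; without loss of generality $\norm{\vec{d}_2} > \norm{\vec{d}_1}$. Choose a right-handed orthonormal frame $\{\vec{e}_1,\vec{e}_2,\vec{e}_3\}$ centered at the active particle with $\vec{e}_3 = \vec{d}_1/\norm{\vec{d}_1}$ and $\vec{d}_2 \in \mathrm{span}\{\vec{e}_1,\vec{e}_3\}$, so that both displacement vectors lie in the plane $d_{i,2} = 0$; by the coplanarity this frame captures the whole configuration.

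Second, the main step: rotate passive particle~2 about the active particle within the common plane until it is antipodal to passive particle~1, i.e. until $\vec{d}_2$ is a negative multiple of $\vec{d}_1$. The tool is the first-order control $\vec{u}_{\Delta t}^{\epsilon,[\vec{h}_1,\vec{h}_3]}$, which by \eqref{eq:first-order_displacement} and the subsequent Lie-bracket computation rotates \emph{each} passive particle about the $\vec{e}_2$-axis through the active particle by the angle $\Delta\theta^\epsilon(r)$ of \eqref{Deltatheta}, which depends on that particle's distance $r$ from the active one. The key observation is that because $\norm{\vec{d}_2} > \norm{\vec{d}_1}$ and $\Delta\theta^\epsilon(r)$ is a decreasing function of $r$ (for $r$ large enough, as guaranteed by well-separation), the inner particle rotates faster than the outer one, so the angular gap between $\vec{d}_1$ and $\vec{d}_2$ changes monotonically under repeated application of this move. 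Iterating the first-order control (alternating sign to tune direction), the difference of polar angles can be driven to exactly $\pi$; since each elementary move can be taken with $\epsilon$ as small as desired and the displacements depend continuously on the configuration, an intermediate-value/continuity argument gives an exact collinear configuration with the active particle strictly between the two passive ones. Throughout, radial distances $r_1, r_2$ are preserved to leading order by the rotational move, so well-separation is maintained, and a small radial correction via the second-order control can absorb the higher-order drift if needed.

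I would close by noting that the resulting configuration satisfies $\vec{d}_1^f = -\alpha \vec{d}_2^f$ with $\alpha = \norm{\vec{d}_1^f}/\norm{\vec{d}_2^f} > 0$, as required, and that the active-particle position returns to (a neighborhood of) its value at the start of this stage because both the first- and second-order controls are loop-type controls with $\vec{x}(\Delta t) = \vec{x}(0)$; its final absolute position is then fixed by a zeroth-order control \eqref{formula11}, which translates the whole rigid-like collinear triple without destroying collinearity (to leading order, and exactly in the limit as the number of substeps grows, by the same continuity argument).

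\textbf{Main obstacle.} The delicate point is making the rotation \emph{exact}: a single first-order move rotates by $\Delta\theta^\epsilon(r_1) \neq \Delta\theta^\epsilon(r_2)$, so the two particles never rotate in lockstep, and one must argue that by composing many such moves — possibly interleaved with small radial adjustments to reposition where each particle sits — the relative angle hits $\pi$ on the nose rather than merely approximately. The cleanest route is a continuity/degree argument: show the map from control parameters to the final relative angle is continuous and takes values on both sides of $\pi$ (for instance, rotating particle~2 "past" particle~1 in one direction versus the other), hence equals $\pi$ for some choice; one must also verify that the well-separated constraint $\mathcal{S}_R^3$ is not violated during any intermediate substep, which is where controlling the amplitude $\epsilon$ and the leading-order preservation of $r_1, r_2$ does the work.
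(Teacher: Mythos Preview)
Your core strategy is the same as the paper's: reduce to the non-equidistant case via Proposition~\ref{prop:make_nonequidistant}, then repeatedly apply the first-order (rotlet-like) control so that the closer passive particle sweeps out a larger angle than the farther one, driving the angular separation to~$\pi$. The paper's version is slightly cleaner in that it fixes the orientation of the frame (with $\vec{e}_2\cdot\vec{d}_2\geq 0$) so that a single sign of the first-order control always \emph{increases} the gap; no ``alternating sign'' is needed, and the continuity argument you worry about reduces to choosing $\varepsilon$ appropriately on the final step. The paper also simply asserts that radial distances are unchanged under the first-order control (working at leading order throughout), so your second-order radial corrections are not invoked.

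Two of your add-ons should be dropped. First, the final zeroth-order translation of the active particle is unnecessary --- the proposition only asks for \emph{some} collinear configuration, not a prescribed $\vec{x}^f$ --- and your claim that a zeroth-order control ``translates the whole rigid-like collinear triple without destroying collinearity'' is false in general: a zeroth-order move perpendicular to the line of collinearity advects the passive particles less than the active one in that direction, breaking collinearity. Second, the well-separation argument is simpler than you suggest: since $r_1$ and $r_2$ are (to leading order) preserved and the angle between $\vec{d}_1$ and $\vec{d}_2$ only increases toward~$\pi$, the distance $\|\vec{y}_2-\vec{y}_1\|$ is monotone increasing, so no intermediate violation can occur.
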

\begin{proof}
    Suppose that the initial configuration does not satisfy the desired final properties. 
    We may also suppose without loss of generality that $\norm{\vec{d}_1^\circ} > \norm{\vec{d}_2^\circ}$. If this is not the case, then we apply Proposition~\ref{prop:make_nonequidistant} to produce a new configuration in which the first passive particle is farther from the active particle. 

    We choose a right-handed orthonormal reference frame $\{\vec{e}_1,\vec{e}_2,\vec{e}_3\}$ in which $\vec{e}_1 = \vec{d}_1/\norm{\vec{d}_1}$, $\vec{d}_1$ and $\vec{d}_2$ belong to the span of $\vec{e}_1$ and $\vec{e}_2$\,, and $\vec{e}_2\cdot\vec{d}_2 \geq 0$ [see Figure~\ref{fig:compound}(b)]. 
    
    Ignoring the $O(\varepsilon^3)$ terms, applying the first-order control $\vec{u}_{\Delta t}^{\epsilon, [\vec{h}_1,\vec{h}_2]}$ with small $\epsilon$ causes no net displacement of the active particle and rotates each of the passive particles around the active particle about the $\vec{e}_3$-axis in the counterclockwise direction by small angles $\Delta\theta_1$ and $\Delta\theta_2$ given by \eqref{Deltatheta}. In the far field, $\Delta \theta$ decreases with $r$, so $\Delta\theta_1 < \Delta\theta_2$ so the angle between $\vec{d}_1$ and $\vec{d}_2$ increases by $(\Delta\theta_2-\Delta\theta_1)$.
    
    Repeating the application of this control with a suitable choice of $\varepsilon$, we can rotate the particles to the final desired collinear configuration with the two passive particles on opposite sides of the active particle. The distances from the active particle to the passive ones are unchanged by repeated applications of the first-order control while the distance between the two passive particles increases so the system remains well separated. 
\end{proof}

\begin{proposition}[Collinear to equidistant collinear configurations]\label{prop:make_equidistant}
The particles can be moved from any collinear initial configuration $(\vec{x}^\circ,\vec{y}_1^\circ,\vec{y}_2^\circ)$ with $\vec{d}_1^\circ = -\alpha \vec{d}_2^\circ$ for some $\alpha > 0$ to a collinear final configuration with the two passive particles equidistant from the active particle, i.e., a configuration $(\vec{x}^f,\vec{y}_1^f,\vec{y}_2^f)$ satisfying $\vec{d}_1^f = -\vec{d}_2^f$.
\end{proposition}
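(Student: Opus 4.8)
\section*{Proof proposal}

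The plan is to reduce the proposition to iterating the second-order control of Section~\ref{sec:translation_control}. Without loss of generality assume $\norm{\vec{d}_1^\circ} \geq \norm{\vec{d}_2^\circ}$ (otherwise exchange the labels of the passive particles; the desired conclusion $\vec{d}_1^f=-\vec{d}_2^f$ is symmetric under this exchange), and note there is nothing to prove if equality holds. Write $r_i:=\norm{\vec{d}_i}$, so $r_1>r_2$, and choose the right-handed frame $\{\vec{e}_1,\vec{e}_2,\vec{e}_3\}$ with $\vec{e}_1 = \vec{d}_1^\circ/r_1$; by the collinearity hypothesis $\vec{d}_1^\circ = r_1\vec{e}_1$ and $\vec{d}_2^\circ=-r_2\vec{e}_1$, the active particle sitting at the origin between the two passive particles. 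First I would record the effect of one application of the control $\vec{u}_{\Delta t}^{\epsilon,[\vec{h}_2,[\vec{h}_1,\vec{h}_2]]}$: it leaves the active particle fixed and, by \eqref{translation_displacement}, displaces each on-axis passive particle along $\vec{e}_1$ by $\pm\epsilon^3 f(r_i) + O(\epsilon^4)$, where $f(r):=\frac{3a}{2r^3}(1-\frac{3a}{2r})(1-\frac{9a}{8r})$ is positive and decreasing in $r$ throughout the well-separated regime. Since both passive particles lie on the $\vec{e}_1$-axis, so do their images (the out-of-plane component vanishes by the reflection symmetry $\vec{e}_3\mapsto-\vec{e}_3$ of the configuration and of this control, and the $\vec{e}_2$-component vanishes to leading order), so the configuration stays collinear on the $\vec{e}_1$-axis.

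Choosing the orientation of the control so that the displacement is in the $-\vec{e}_1$ direction, particle~$1$ moves to $(r_1-\epsilon^3 f(r_1))\vec{e}_1+O(\epsilon^4)$ and particle~$2$ to $(-r_2-\epsilon^3 f(r_2))\vec{e}_1+O(\epsilon^4)$, so $r_1$ decreases, $r_2$ increases, and the gap $r_1-r_2$ decreases (strictly, for small $\epsilon>0$) by $\epsilon^3\big(f(r_1)+f(r_2)\big)+O(\epsilon^4)>0$. Moreover the passive--passive separation $r_1+r_2$ changes by $\epsilon^3\big(f(r_2)-f(r_1)\big)+O(\epsilon^4)>0$ because $r_1>r_2$ and $f$ is decreasing, so the well-separated condition is preserved: $r_2$ only grows, $r_1$ only shrinks toward the common limiting distance, which lies in $(r_2^\circ,r_1^\circ)$ and hence exceeds $R$. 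Iterating this control with a fixed small $\epsilon$, the gap decreases by a definite amount at each step as long as $r_1>r_2$, so after finitely many applications the residual gap $r_1-r_2$ is as small as we wish.

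The delicate point is the final step: we need to land exactly on $r_1=r_2$, not merely close. For this I would view the gap after one further application as a function $\Phi(\epsilon)$ of its amplitude; $\Phi$ is continuous, $\Phi(0)$ equals the current small positive gap, and for small $\epsilon>0$ one has $\Phi(\epsilon)=\Phi(0)-\epsilon^3\big(f(r_1)+f(r_2)\big)+O(\epsilon^4)$, which becomes negative at some small $\epsilon$ once $\Phi(0)$ is small enough. By the intermediate value theorem there is an admissible amplitude $\epsilon^\star$ with $\Phi(\epsilon^\star)=0$, and since $\epsilon^\star$ is small the corresponding trajectory remains well separated; the result is a collinear configuration with $r_1=r_2$ and the passive particles on opposite sides of the active one, i.e.\ $\vec{d}_1^f=-\vec{d}_2^f$. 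I expect this passage from the approximate equalization provided by the leading-order formulas to an exact equidistant configuration to be the main obstacle; it is resolved, as above, by the continuity argument on the amplitude of the last move after the gap has been shrunk by iteration. A secondary point requiring care is that the second-order control maps on-axis passive particles exactly back onto the axis (no higher-order off-axis drift): this follows from the reflection symmetries of the system, and in any case any such drift is of order $\epsilon^4$ and can be absorbed into the tolerance of the final continuity argument or removed by an interposed first-order rotation, which leaves $r_1$ and $r_2$ unchanged.
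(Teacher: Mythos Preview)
Your approach differs substantially from the paper's and carries a gap that the paper's method sidesteps entirely. The paper uses the \emph{zeroth-order} control: simply translate the active particle along the common axis towards the farther passive particle. For a passive particle on the $\vec{e}_1$-axis and control $\vec{u}\parallel\vec{e}_1$, the induced displacement is purely axial (this is exact, by the axisymmetry of the Stokeslet field about its force direction, not just a leading-order statement), so collinearity is preserved at all times with no higher-order corrections to track. Since the active particle outruns both passive ones, $r_1$ decreases and $r_2$ increases monotonically, and continuity produces the equidistant configuration; the active particle is displaced, but the proposition does not require otherwise.

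Your route via the second-order control keeps the active particle fixed, which is appealing, but the claimed preservation of collinearity is where the argument breaks. The reflection $\vec{e}_3\mapsto-\vec{e}_3$ does force the $\vec{e}_3$-component to vanish, but $\vec{u}_{\Delta t}^{\epsilon,[\vec{h}_2,[\vec{h}_1,\vec{h}_2]]}$ is \emph{not} invariant under $\vec{e}_2\mapsto-\vec{e}_2$: the inner square loop reverses orientation and the outer $\vec{h}_2$-step reverses sign under that reflection, so the sequence of moves is genuinely different. Hence the $\vec{e}_2$-component of the on-axis displacement need not vanish beyond leading order. Over the $O(\epsilon^{-3})$ iterations you require, an $O(\epsilon^4)$ per-step off-axis drift accumulates to $O(\epsilon)$, and formula~\eqref{translation_displacement} (which assumes $\vec{d}=d_1\vec{e}_1$) ceases to apply exactly. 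Your proposed repair by a first-order rotation does not close the gap either: with $r_1\neq r_2$, a single application of \eqref{Deltatheta} rotates the two passive particles through \emph{different} angles, so it cannot realign both with the $\vec{e}_1$-axis simultaneously. The strategy is salvageable---for instance by using symmetrized multi-cycle variants of the second-order control (cf.\ the two-cycle construction in Section~\ref{sec:errors}) that restore the $\vec{e}_2\mapsto-\vec{e}_2$ symmetry---but as written the argument is incomplete, and the paper's zeroth-order approach is both simpler and exact.
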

\begin{proof}
    Suppose without loss of generality that $\norm{\vec{d}_1^\circ} > \norm{\vec{d}_2^\circ}$ and that $\vec{e}_1 = \vec{d}_1^\circ/\norm{\vec{d}_1^\circ}$ [see Figure~\ref{fig:compound}(c)]. Applying the zeroth-order control $\vec{u}_{\Delta t}^{\epsilon, \vec{h}_1}$ moves all three particles in the $\vec{e}_1$ direction according to \eqref{formula11}. The active particle translates by the largest magnitude, $\epsilon$, and the second passive particle translates by a larger distance than the first particle because the displacement decreases with $r$. Hence, $\norm{\vec{d}_1}$ decreases and $\norm{\vec{d}_2}$ increases with each application of the zeroth-order control. Note that, within the far field assumption, the active particle can approach the first passive particle arbitrarily closely by repeated applications of the control. Hence, by continuity, a point can be reached at which $\norm{\vec{d}_1^f} = \norm{\vec{d}_2^f}$. The configurations are always well separated if the initial configuration is.
\end{proof}

\begin{proposition}[Reorienting equidistant collinear configurations]\label{prop:rotatingparticles}
The particles can be moved from any equidistant collinear initial configuration $(\vec{x}^\circ,\vec{y}_1^\circ,\vec{y}_2^\circ)$ with $\vec{d}_1^\circ = -\vec{d}_2^\circ$ to any other equidistant collinear final configuration $(\vec{x}^f=\vec{x}^\circ,\vec{y}_1^f,\vec{y}_2^f)$ with the same position of the active particle and the same distance between the active and passive particles, i.e., a configuration with $\vec{d}_1^f = -\vec{d}_2^f$ and $\norm{\vec{d}_1^f} = \norm{\vec{d}_1^\circ}$.
\end{proposition}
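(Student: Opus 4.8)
The plan is to reach the target configuration using only the first-order control of Section~\ref{sec:rotation_control}, applied repeatedly, in the spirit of the proof of Proposition~\ref{prop:make_collinear}; the key new feature exploited here is that two passive particles at the \emph{same} distance from the active one are rotated by the \emph{same} angle, so the equidistant collinear structure is transported along. First I would fix the geometry. Since the target is again equidistant and collinear with $\vec{x}^f = \vec{x}^\circ$ and the common distance $r \coloneqq \norm{\vec{d}_1^\circ} = \norm{\vec{d}_1^f}$, it is entirely determined by the common direction $\vec{n}^f \coloneqq \vec{d}_1^f/r$; write $\vec{n}^\circ \coloneqq \vec{d}_1^\circ/r$. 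Because any two unit vectors lie in a common plane, I would choose a right-handed orthonormal frame $\{\vec{e}_1,\vec{e}_2,\vec{e}_3\}$ centred at $\vec{x}^\circ$ with $\vec{n}^\circ = \vec{e}_1$ and $\vec{n}^f = \cos\phi\,\vec{e}_1 + \sin\phi\,\vec{e}_2$ for some $\phi\in[0,\pi]$ (when $\vec{n}^f = \pm\vec{n}^\circ$ the frame is not unique, but any admissible choice works). In this frame both passive particles lie in the $\vec{e}_1$--$\vec{e}_2$ plane, and since under $\vec{u}_{\Delta t}^{\epsilon,[\vec{h}_1,\vec{h}_2]}$ the active particle moves inside that plane with in-plane forcing, the Stokeslet flow it generates has no $\vec{e}_3$-component on the plane; hence the whole trajectory of system~\eqref{MM05} remains in the $\vec{e}_1$--$\vec{e}_2$ plane.

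Next I would iterate the control $\vec{u}_{\Delta t}^{\epsilon,[\vec{h}_1,\vec{h}_2]}$. Each application returns the active particle to its current position, hence always to $\vec{x}^\circ$, and by~\eqref{eq:first-order_displacement}--\eqref{Deltatheta} rotates each passive particle about the $\vec{e}_3$-axis through $\vec{x}^\circ$ by the angle $\Delta\theta^\epsilon(r)$, which depends only on $r$. Since the two passive particles share the same $r$, they rotate by the same angle, so to leading order $\vec{d}_1 = -\vec{d}_2$ and $\norm{\vec{d}_1} = \norm{\vec{d}_2} = r$ are preserved while the common direction advances by $\Delta\theta^\epsilon(r)$; after $n$ applications it has advanced by $n\,\Delta\theta^\epsilon(r)$. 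To land exactly on $\vec{n}^f$ I would take $n$ large enough that $\phi/n < \Delta\theta^{\epsilon_0}(r)$, with $\epsilon_0$ the largest admissible amplitude, and then use that $\epsilon \mapsto \Delta\theta^\epsilon(r)$ is, by~\eqref{Deltatheta}, a positive multiple of $\epsilon^2$ in the far field (so continuous and strictly increasing on $(0,\epsilon_0]$) to select $\epsilon$ with $\Delta\theta^\epsilon(r) = \phi/n$, choosing the orientation of the control ($[\vec{h}_1,\vec{h}_2]$ or $[\vec{h}_2,\vec{h}_1]$) so that the rotation is toward $\vec{n}^f$. Well-separatedness is automatic: the two passive--active distances stay equal to $r$ and the passive--passive distance stays equal to $2r$, up to the controlled $O(\epsilon)$ excursions during each loop.

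The step I expect to be the main obstacle is promoting this leading-order construction to an exact statement: the first-order control realises a rigid rotation only up to the $O(\epsilon^3)$ terms in~\eqref{eq:first-order_displacement}, which perturb both the equidistance and the common distance $r$, and these perturbations accumulate over the $n = O(\phi/\epsilon^2)$ loops into a total error $O(\phi\epsilon)$. I would deal with this exactly as Propositions~\ref{prop:make_collinear} and~\ref{prop:make_equidistant} deal with the analogous issues: the composite map given by $n$ applications of $\vec{u}_{\Delta t}^{\epsilon,[\vec{h}_1,\vec{h}_2]}$ is continuous in $\epsilon$ and converges to the exact rotation as $\epsilon\to 0$, and any residual mismatch in $r$ or in the collinear/equidistant structure is then removed by composing with the radial-adjustment move of Proposition~\ref{prop:adjust_distances} and a final short reapplication of Propositions~\ref{prop:make_collinear}--\ref{prop:make_equidistant}. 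Making this clean-up step quantitative --- so that it provably terminates at the exact target rather than merely approaching it --- is where I would concentrate the effort.
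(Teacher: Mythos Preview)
Your proposal is correct and follows essentially the same approach as the paper: choose a frame with $\vec{d}_1^\circ$ and $\vec{d}_1^f$ in the $\vec{e}_1$--$\vec{e}_2$ plane and iterate the first-order control $\vec{u}_{\Delta t}^{\epsilon,[\vec{h}_1,\vec{h}_2]}$, using that $r_1=r_2$ forces equal rotation angles via~\eqref{Deltatheta} so the antipodal collinear structure is preserved. The paper's proof stops there; your extended discussion of the $O(\epsilon^3)$ remainder and the clean-up via Propositions~\ref{prop:make_collinear}, \ref{prop:make_equidistant}, and~\ref{prop:adjust_distances} is not present in the paper, which --- consistently with the other proofs in Section~\ref{sec:compound_moves} --- treats the leading-order displacement formulas as exact.
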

\begin{proof}
    We choose a right-handed orthonormal reference frame $\{\vec{e}_1,\vec{e}_2,\vec{e}_3\}$ in which $\vec{d}_1^\circ$ and $\vec{d}_1^f$ lie in the span of $\vec{e}_1$ and $\vec{e}_2$. Then, the transformation from the initial to the desired final configuration can be described as a rotation about the $\vec{e}_3$ axis through the active particle by an angle $\theta = \arccos\left(\vec{d}_1^\circ\cdot \vec{d}_1^f/\norm{\vec{d}_1^\circ}^2\right)$.

    As in the proof of Proposition~\ref{prop:make_collinear}, we repeatedly apply the first-order control $\vec{u}_{\Delta t}^{\epsilon, [\vec{h}_1,\vec{h}_2]}$ but now $r_1 = r_2$ so the particles rotate by equal angles and remain collinear.
\end{proof}

\begin{proposition}[Translating a group of equidistant collinear particles]\label{prop:translatingparticles}
Let the particles be initially equidistant and collinear, i.e., the initial configuration  $(\vec{x}^\circ,\vec{y}_1^\circ,\vec{y}_2^\circ)$ satisfies $\vec{d}_1^\circ = -\vec{d}_2^\circ$.
Given any scalar $\ell>0$ and unit vector $\vec{e}_{\perp} \perp \vec{d}_1^\circ$, the active and two passive particles can be translated by the same vector $\vec{\Delta} = \ell\vec{e}_{\perp}$ to the final configuration $(\vec{x}^f,\vec{y}_1^f,\vec{y}_2^f) = (\vec{x}^\circ+\vec{\Delta},\vec{y}_1^\circ+\vec{\Delta},\vec{y}_2^\circ+\vec{\Delta})$.
\end{proposition}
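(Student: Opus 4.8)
The plan is to reduce the transverse translation to the iteration of a single two-move building block. The underlying idea: a zeroth-order push of the active particle along $\vec{e}_\perp$ drags \emph{both} passive particles along $\vec{e}_\perp$ by the same, smaller amount, and a subsequent second-order move can then make the passive particles ``catch up'' in the $\vec{e}_\perp$ direction without moving the active particle, so the net effect of the block is a rigid translation of all three particles.

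Concretely, I would first choose coordinates: since $\vec{d}_1^\circ=-\vec{d}_2^\circ$ by hypothesis, set $r:=\norm{\vec{d}_1^\circ}$ and pick the right-handed frame with $\vec{e}_1=\vec{e}_\perp$ and $\vec{e}_3=\vec{d}_1^\circ/r$, so that $\vec{d}_1^\circ=r\vec{e}_3=-\vec{d}_2^\circ$ and $\vec{\Delta}=\ell\vec{e}_1$; and, by translation invariance, place $\vec{x}^\circ$ at the origin. The building block $\Sigma_\epsilon$ is: (i) apply the zeroth-order control $\vec{u}_{\Delta t}^{\epsilon,\vec{h}_1}$, which by~\eqref{formula11} moves the active particle by exactly $\epsilon\vec{e}_1$ and each passive particle by $\tfrac{3a\epsilon}{4r}\vec{e}_1+O(\epsilon^2)$ (the $\vec{d}\otimes\vec{d}$ term dropping out because $\vec{d}_j\cdot\vec{e}_1=0$ initially); then (ii) apply the second-order control $\vec{u}_{\Delta t}^{\eta,[\vec{h}_2,[\vec{h}_1,\vec{h}_2]]}$ with amplitude $\eta=\eta(\epsilon)=O(\epsilon^{1/3})$ tuned, using~\eqref{translation_displacement2} and the distances reached after step~(i), so that each passive particle advances the missing $(1-\tfrac{3a}{4r})\epsilon$ along $\vec{e}_1$ while the active particle returns exactly to where step~(i) left it. The full construction is then $N$ applications of $\Sigma_{\ell/N}$ for $N$ large.

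Two exact structural facts make $\Sigma_\epsilon$ behave like a translation. \emph{(a) The active particle's net motion is exactly prescribed}: in step~(i) the $O(\epsilon^2)$ remainder in~\eqref{formula11} sits only in the passive block, and in step~(ii) the control is a closed concatenation of $\pm\vec{e}_1,\pm\vec{e}_2$ segments, so the active particle returns exactly to its starting point (as already visible in~\eqref{eq:first-order_displacement} and its second-order analogue, whose remainders again live only in the passive blocks); hence, summing over the $N$ identical super-steps, the active particle ends at \emph{exactly} $\vec{x}^\circ+\ell\vec{e}_1=\vec{x}^\circ+\vec{\Delta}$. \emph{(b) Equidistance and mirror symmetry are preserved exactly}: let $R_t$ be the reflection through the plane orthogonal to $\vec{e}_3$ passing through the active particle at time $t$. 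Every control used has zero $\vec{e}_3$-component, so the active particle stays in that plane; since~\eqref{MM05} is equivariant under orthogonal maps (because $\vec{G}(Q\vec{d})=Q\vec{G}(\vec{d})Q^{\mathsf T}$) and $\vec{y}_2(0)=R_0\vec{y}_1(0)$, uniqueness of solutions forces $\vec{y}_2(t)=R_t\vec{y}_1(t)$ for all $t$, hence $\norm{\vec{d}_2(t)}=\norm{\vec{d}_1(t)}$ and the two passive particles always share the same $\vec{e}_1$- and $\vec{e}_2$-displacements. Combining (a), (b) with the leading-order formulas~\eqref{formula11} and~\eqref{translation_displacement2}, one super-step $\Sigma_\epsilon$ translates all three particles by $\epsilon\vec{e}_1$ and returns $\vec{d}_1$ to $r\vec{e}_3$, with residual $O(\epsilon^2)+O(\eta^4)=O(\epsilon^{4/3})$ coming from the curvature of the tracer drift in~(i) and the Lie-bracket remainder in~(ii); and since the transverse drift only lengthens $\norm{\vec{d}_j}$ while the passive--passive distance changes by $O(\epsilon^2)$, with each super-step confining all particles to a ball of radius $O((\ell/N)^{1/3})$ about their current positions, the configuration stays well separated throughout once $N$ is large.

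Iterating $\Sigma_{\ell/N}$ $N$ times therefore lands the active particle \emph{exactly} at the target while $\vec{d}_1$ --- and so, by~(b), the whole configuration --- is within $O(\ell(\ell/N)^{1/3})$ of $r\vec{e}_3$, i.e.\ each passive particle is within the same distance of $\vec{y}_j^\circ+\vec{\Delta}$. The remaining step is to cancel this residual \emph{exactly}, and this is the point I expect to be the main obstacle. I would handle it as in Propositions~\ref{prop:make_collinear}--\ref{prop:rotatingparticles}: retune the amplitudes $\eta_i$ of the second-order sub-moves (which never move the active particle) and, if needed, append a short sequence of extra first- and second-order correction moves, then invoke a continuity / implicit-function argument --- legitimate because, for $N$ large, the map from these correction amplitudes to the terminal value of $\vec{d}_1$ is a submersion near $r\vec{e}_3$ --- to reach the target configuration exactly. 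Everything preceding this final adjustment is an explicit construction; making the exact landing rigorous while keeping the active particle pinned in place is the delicate part.
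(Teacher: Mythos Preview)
Your approach is genuinely different from the paper's and, as you yourself flag, the exact-landing step is incomplete. The paper avoids asymptotics altogether by a symmetry trick: with $\vec{e}_1=\vec{e}_\perp$ and $\vec{e}_2=\vec{d}_2^\circ/\norm{\vec{d}_2^\circ}$, push the active particle a distance $\ell/2$ along $\vec{e}_1$ with the zeroth-order control; rotate both passives by the angle $\pi$ about the $\vec{e}_2$-axis through the current active position (the first-order control does this, rotating both by the same angle since $r_1=r_2$ throughout by the mirror symmetry); then push another $\ell/2$ along $\vec{e}_1$. The third stage is precisely the time-reversal of the first stage conjugated by that $\pi$-rotation, so by the equivariance and reversibility of~\eqref{MM05} the relative vectors return \emph{exactly} to $\vec{d}_j^\circ$ and every particle lands on its target. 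No $\varepsilon\to0$ limits, no correction step, no implicit-function theorem; the only place asymptotics enter is inside the rotation move, which is handled once and for all in Proposition~\ref{prop:rotatingparticles}.

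Your gap is real, not merely cosmetic. Preserving the mirror symmetry $R_t$ forces the active particle to stay in the plane $\{\vec{e}_3=0\}$, so the only Lie brackets available for your correction are those built from $\vec{h}_1$ and $\vec{h}_2$. At $\vec{d}_1=r\vec{e}_3$ the first-order bracket $[\vec{h}_1,\vec{h}_2]$ vanishes (the rotation axis passes through the particle), and the two independent second-order brackets $[\vec{h}_1,[\vec{h}_1,\vec{h}_2]]$ and $[\vec{h}_2,[\vec{h}_1,\vec{h}_2]]$ produce, by the formula preceding~\eqref{translation_displacement}, leading-order displacements along $\vec{e}_2$ and $\vec{e}_1$ only; neither moves $\vec{d}_1$ in the radial $\vec{e}_3$ direction. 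Hence the map from your correction amplitudes to the terminal $\vec{d}_1$ has rank at most two near $r\vec{e}_3$, and the submersion claim fails as stated. One can repair this by first driving the $\vec{e}_1$- and $\vec{e}_2$-components of the residual to zero (reaching an equidistant collinear configuration along $\vec{e}_3$) and then invoking the distance-adjustment move of Proposition~\ref{prop:adjust_distances} to fix the radial error; but that is not what your proposal does, and the resulting argument is substantially more intricate than the paper's three-line symmetry construction.
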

\begin{proof}
We consider a right-handed orthonormal reference frame $\{\vec{e}_1,\vec{e}_2,\vec{e}_3\}$ in which $\vec{e}_1 = \vec{e}_\perp$ and $\vec{e_2} = \vec{d}_2^\circ/\norm{\vec{d}_2^\circ}$. In this reference frame, $\vec{d}_1^\circ = (0, -d^\circ, 0)^\top$ and $\vec{d}_2^\circ = (0, d^\circ, 0)^\top$, where $d^\circ = \norm{\vec{d}_1^\circ}$.

Our strategy is to use the zeroth-order control to move the particles along the $\vec{e}_1$ direction. Since the passive particles move less than the active particle, they gradually fall behind. We use the first-order control to bring the passive particles in front of the active particle as needed and use symmetry to arrive with the same relative configuration as the initial state [see Figure~\ref{fig:compound}(e)].
    
More specifically, the first stage of our strategy involves applying the zeroth-order control $\vec{u}_{\Delta t}^{\ell/2, \vec{h}_1}$ moves the active particle to $\bar{\vec{x}} = \vec{x}^\circ + \tfrac{\ell}{2}\vec{e}_1$. For small $\ell$, the displacements of the passive particles are approximated by \eqref{formula11}. To leading order in $\ell$, the two passive particles undergo the same displacement, which is in the~$\vec{e}_1$ direction. For finite (possibly large) $\ell$, the exact displacements of the two passive particles are constrained by symmetry to have the form $\vec{\Delta}_{\vec{y}_1} = ({\Delta}_1, \Delta_2, 0)^\top$ and $\vec{\Delta}_{\vec{y}_2} = (\Delta_1, -\Delta_2, 0)^\top$. Hence, the relative position vectors of the passive particles in this configuration are of the form $\bar{\vec{d}}_1=\bar{\vec{y}}_1-\bar{\vec{x}}=(-\bar{d}_{1},-\bar{d}_{2}, 0)^\top$ and  $\bar{\vec{d}}_2=\bar{\vec{y}}_2-\bar{\vec{x}}=(-\bar{d}_{1}, \bar{d}_{2}, 0)^\top$. Since the component of the velocity in the $\vec{e}_1$ direction is always larger for the active particle than for the passive particles, we expect $\bar{d}_1 > 0$. We note, however, that this observation is not necessary for our proof.
    
In the second stage, we use Proposition~\ref{prop:rotatingparticles} to rotate the passive particles by the angle $\pi$ about the $\vec{e}_2$ axis through $\bar{\vec{x}}$ to achieve the configuration $(\bar{\vec{x}}, \tilde{\vec{y}}_1,\tilde{\vec{y}}_2)$ with relative position vectors $\tilde{\vec{d}}_1 = (\bar{d}_1, -\bar{d}_{2}, 0)^\top$ and $\tilde{\vec{d}}_2 = (\bar{d}_1, \bar{d}_2, 0)^\top$.

In the third stage, We apply the zeroth-order control $\vec{u}_{\Delta t}^{\ell/2, \vec{h}_1}$, bringing the active particle to $\vec{x}^f = \bar{x} + \tfrac{\ell}{2}\vec{e}_1 = \vec{x}^\circ + \ell\vec{e}_\perp$. This is equivalent to a time-reversal of the control applied at the beginning of our strategy in a coordinate frame that has been rotated by $\pi$ about the $\vec{e}_2$ axis through $\bar{\vec{x}}$. Hence, the displacements of the passive particles are the negative of the displacements $\vec{\Delta}_{\vec{y}_1}$ and $\vec{\Delta}_{\vec{y}_2}$ described earlier, rotated about the $\vec{e}_2$ direction. The final positions of the passive particles are, therefore, $\vec{y}_1^f = \vec{x}^f + \vec{d}_1^\circ = \vec{y}_1^\circ + \ell\vec{e}_\perp$ and $\vec{y}_2^f = \vec{x}^f + \vec{d}_2^\circ = \vec{y}_2^\circ + \ell\vec{e}_\perp$.

The procedure described above achieves the desired outcome provided that the configuration remains well separated at all times. Note that the second stage of the strategy does not alter distances between particles and the third stage is a rotated reversal of the first stage. Hence, we need only consider the changes in distances during the first stage of our strategy. In this stage, the distances $r_1=\norm{\vec{y}_1 - \vec{x}}$ and $r_2=\norm{\vec{y}_2-\vec{x}}$ increase from their initial values $r_1^\circ$ and $r_2^\circ$, respectively, because the active particle moves faster than the passive particles in a direction away from them. On the contrary, the distance between the passive particles, $\norm{\vec{y}_2-\vec{y}_1}=2\bar{d}_2$, decreases during this stage because the passive particles move towards each other in the $\vec{e}_2$-direction and always have the same position in the $\vec{e}_1$-direction. Hence, it is possible for the distance between the passive particles to decrease to the well-separated limit $R$. 

Suppose that this would occur at the point where the active particle has traveled a distance $\ell^*$. To avoid reaching this point, we break up the motion into a number $N_\ell$ of shorter segments of length $\bar{\ell}=\ell/N_\ell < \ell^*$, so that we may accomplish translations of the particles by a displacement vector $\bar{\ell}\vec{e}_1$ without violating the well-separated condition. The repetition of this shortened motion achieves the desired final outcome and maintains the desired separation.
\end{proof}

\begin{proposition}[Adjusting distances between particles in an equidistant collinear configuration]\label{prop:adjust_distances}
The distance between the active and passive particles in an equidistant collinear configuration can be changed arbitrarily. That is, given an initial configuration $(\vec{x}^\circ,\vec{y}_1^\circ = \vec{x}^\circ -r^\circ\vec{e}_2, \vec{y}_2^\circ = \vec{x}^\circ + r^\circ \vec{e}_2)$ with $r^\circ > R$, there is a control that achieves the final configuration $(\vec{x}^f=\vec{x}^\circ,\vec{y}_1^f = \vec{x}^\circ -r^f\vec{e}_2, \vec{y}_2^f = \vec{x}^\circ + r^f \vec{e}_2)$ with $r^f > R$. 
\end{proposition}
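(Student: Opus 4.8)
The plan is to build a single ``slingshot'' move, composed of two second-order controls, that keeps the active particle fixed, keeps the two passive particles collinear with it and on opposite sides of it, and strictly increases their common distance; time-reversibility then supplies the reverse effect, and a connectedness argument covers all target distances.

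First I would complete $\vec{e}_2$ to a right-handed orthonormal frame $\{\vec{e}_1,\vec{e}_2,\vec{e}_3\}$, so that the three particles lie on the $\vec{e}_2$-axis through $\vec{x}^\circ$. The first half of the slingshot is the second-order control that, by~\eqref{translation_displacement}, fixes the active particle exactly (being assembled from sub-controls whose time integrals cancel) and displaces a passive particle on the $\vec{e}_2$-axis purely in the $+\vec{e}_2$ direction, by an amount that depends only on its distance from the active particle, is positive and decreasing in that distance in the far field, and can be made as small as desired. Applied to the equidistant configuration at distance $r^\circ$, it moves both passive particles by the same amount $\delta>0$, giving a collinear but non-equidistant configuration with one passive particle at distance $r^\circ-\delta$ and the other at $r^\circ+\delta$. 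The second half is the second-order control that again fixes the active particle but pushes both passive particles in the $-\vec{e}_2$ direction; since the closer passive particle sits in the stronger flow, it is pushed away from the active particle by more than the farther one is pushed towards it, and I would choose the amplitude of this second push so that the two passive particles end up equidistant from the active particle again -- equivalently, so that the midpoint of the passive pair returns to $\vec{x}^\circ$. A short computation with~\eqref{translation_displacement} then shows that this common distance $r'$ satisfies $r'>r^\circ$, varies continuously with $\delta$, and tends to $r^\circ$ as $\delta\to0^+$; the configuration stays well separated throughout for small $\delta$, since all interparticle distances remain within $O(\delta^{1/3})$ of their initial values and the two passive particles move apart. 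A final application of Proposition~\ref{prop:rotatingparticles} re-aligns the passive pair with $\vec{e}_2$ if finite-amplitude corrections have tilted it.

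I would then call two distances $r,r'>R$ \emph{equivalent} if the equidistant collinear configuration at distance $r$ along $\vec{e}_2$, with the active particle at $\vec{x}^\circ$, can be carried to the one at distance $r'$ by a well-separated control that returns the active particle to $\vec{x}^\circ$. This relation is symmetric, because the Stokes kinematics~\eqref{MM05} are time-reversible and the time-reversal of an admissible (piecewise-constant) control is again admissible, and it is transitive by concatenation. The slingshot, together with its time-reversal, shows that every distance slightly larger or slightly smaller than a given $r$ is equivalent to $r$, so the equivalence classes are open; since $(R,\infty)$ is connected there is a single class, and hence the prescribed $r^f>R$ is equivalent to $r^\circ$, which is exactly the claim.

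The main obstacle is that the net displacement produced by a second-order control points in a \emph{fixed} spatial direction rather than radially, so no single elementary move can push both passive particles toward (or away from) the active particle while keeping them equidistant and collinear; the slingshot circumvents this for increasing distances, and time-reversibility then supplies the decreasing ones. A secondary point, handled as in the proofs of Propositions~\ref{prop:make_collinear}--\ref{prop:make_equidistant}, is that the elementary moves realise their stated effects only to leading order in the amplitude; that the slingshot, for suitable choices of its amplitude and of the final rotation angle, produces \emph{exactly} an equidistant collinear configuration at the prescribed distance is justified by a continuity/intermediate-value argument applied to the exact flow rather than by the leading-order formulae alone.
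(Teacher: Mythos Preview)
Your slingshot argument is sound at the leading-order level the paper works at: the key inequality $r'>r^\circ$ follows from the monotonicity of $r_1+r_2$ under the second push (since $g(r_1)>g(r_2)$ while $r_1<r_2$), and the openness/connectedness argument then closes the proof. One phrasing to tighten: the time-reversal of your slingshot does not itself take $r$ to something smaller; what you actually use is that the slingshot launched from a nearby $r''<r$ reaches $r$ (by continuity of $r'(\delta)$), so $r''\sim r$ via the symmetry of the relation. Also note that the clean-up step via Proposition~\ref{prop:rotatingparticles} presupposes the pair remains collinear with the active particle, which holds at leading order but is not guaranteed beyond it; the paper's argument operates at the same level of rigour, so this is not a defect relative to the original.

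The paper takes a different and more direct route. Instead of pushing along the $\vec{e}_2$-axis, it applies $[\vec{h}_2,[\vec{h}_1,\vec{h}_2]]$ to the equidistant pair on the $\vec{e}_2$-axis and observes that the leading-order displacement field in the $\vec{e}_1$--$\vec{e}_2$ plane has streamlines converging to the active particle (Figure~\ref{fig:vectorfields}(c)); the $\vec{e}_2\mapsto-\vec{e}_2$ symmetry of this field keeps the two $\vec{e}_2$-components equal and opposite throughout. One stops when those components equal $\pm r^f$, then applies $[\vec{h}_3,[\vec{h}_1,\vec{h}_3]]$, which by~\eqref{translation_displacement2} (with index $2$ replaced by $3$) translates both particles uniformly in $\vec{e}_1$, restoring them to the $\vec{e}_2$-axis. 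The case $r^f>r^\circ$ is then obtained by running this in reverse. Compared with your approach, the paper's construction is fully explicit and hits the target $r^f$ in a single pass rather than via an equivalence-class argument; your approach, on the other hand, uses only the on-axis form~\eqref{translation_displacement} of the second-order control and avoids appealing to the global streamline picture, at the cost of the topological wrap-up.
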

\begin{proof}
We first describe how to achieve a final distance $r^f < r^\circ$. Repeated applications of the second-order control $\vec{u}_{\Delta t}^{\epsilon, [\vec{h}_2,[\vec{h}_1,\vec{h}_2]]}$ leave the active particle at the initial position and move the passive particles along curves in the $\vec{e}_1$--$\vec{e}_2$ plane that lead to the active particle, shown as streamlines in Figure~\ref{fig:vectorfields}(c). By symmetry, the passive particles maintain equal and opposite displacements in the $\vec{e}_2$-direction. We stop applying this control when we reach relative positions $\vec{e}_2\cdot\vec{d}_2 = -\vec{e}_2\cdot \vec{d}_1 = r^f$.

We then apply the second-order control $\vec{u}_{\Delta t}^{\epsilon, [\vec{h}_3,[\vec{h}_1,\vec{h}_3]]}$. By \eqref{translation_displacement2}, replacing the index 2 with 3, this control moves the two passive particles in the $\vec{e}_1$-direction. We repeat this control until we achieve $\vec{e}_1\cdot \vec{d}_1 = \vec{e}_1\cdot \vec{d}_2 = 0$, which then satisfies the desired final configuration. The complete process is illustrated in Figure~\ref{fig:compound}(f).

By the assumption that $r^f > R$, we have that $\norm{\vec{d}_j} \geq \vec{e}_2\cdot \vec{d}_j \geq r^f >R$ for $j=1,2$ and $\norm{\vec{y}_2-\vec{y}_1} = 2\norm{\vec{d}_1} > R$ throughout this strategy so the configuration remains well separated.

In the case that $r^f > r^\circ$, we apply the control strategy above in reverse.
\end{proof}

\section{Controllability for one or two passive particles}
\label{sec:controllability_M_1_2}
In this section, we prove controllability results for systems of one active particle and one or two passive particles, based on system \eqref{MM05}. We first prove the case for two passive particles. Note that controllability with one passive particle follows from the controllability with two passive particles, since the two passive particles act as tracers and do not affect the dynamics of each other or of the active particle. The control strategy, however, can be simplified for a single passive particle so we will present a separate proof.

\begin{theorem}[Controllability with $M=2$ passive particles]\label{222}
An active particle and two passive particles can be moved from any well-separated initial configuration $(\vec{x}^\circ, \vec{y}_1^\circ,\vec{y}_2^\circ)$ to any well-separated final configuration $(\vec{x}_1^f, \vec{y}_1^f, \vec{y}_2^f)$ along a well-separated trajectory. That is, given $(\vec{x}^\circ, \vec{y}_1^\circ,\vec{y}_2^\circ), (\vec{x}_1^f, \vec{y}_1^f, \vec{y}_2^f) \in \mathcal{S}_R^3$\,, there exist
$T\in(0,+\infty)$ and a control map $\vec{u}\in L^\infty([0,T];\mathbb{R}^3)$ such that system \eqref{MM05} with $M=2$, namely,
\begin{equation*}
\begin{cases} 
\dot{\vec{x}}(t)=\vec{u}(t), \\[2mm]
\displaystyle \dot{\vec{y}}_1(t)=  \frac{3a}{4}\left(\frac{1}{\norm{\vec{d}_1(t)}}\One+\frac{1}{\norm{\vec{d}_1(t)}^3}\vec{d}_1(t)\otimes \vec{d}_1(t)\right){\vec{u}}(t),\\[2mm]
\displaystyle \dot{\vec{y}}_2(t)=  \frac{3a}{4}\left(\frac{1}{\norm{\vec{d}_2(t)}}\One+\frac{1}{\norm{\vec{d}_2(t)}^3}\vec{d}_2(t)\otimes \vec{d}_2(t)\right){\vec{u}}(t),
\end{cases}
\end{equation*}
admits a unique solution $(\vec{x},\vec{y}_1,\vec{y}_2)\in AC([0,T];\mathcal{S}_R^{3})$ (depending on $\vec{u}$), such that $(\vec{x}(0),\vec{y}_1(0),\vec{y}_2(0))=(\vec{x}^\circ,\vec{y}_1^\circ, \vec{y}_2^\circ)$ and $(\vec{x}(T),\vec{y}_1(T),\vec{y}_2(T))=(\vec{x}^f,\vec{y}_1^f,\vec{y}_2^f)$.
\end{theorem}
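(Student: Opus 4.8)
The plan is to concatenate the compound moves of Section~\ref{sec:compound_moves}, using an equidistant collinear configuration as an intermediate reference state reachable from both the initial and the final configuration. The argument splits into three stages. In the \emph{reduction stage}, I would apply Proposition~\ref{prop:make_collinear} to steer $(\vec{x}^\circ,\vec{y}_1^\circ,\vec{y}_2^\circ)$ to a collinear configuration with the active particle between the two passive ones, then Proposition~\ref{prop:make_equidistant} to make the passive particles equidistant from the active one, arriving at an equidistant collinear configuration $C^\circ$ with, say, active particle at $\vec{x}'$, axis direction $\vec{n}'$, and common active--passive distance $r'$. The same two propositions, applied to $(\vec{x}^f,\vec{y}_1^f,\vec{y}_2^f)$, produce an equidistant collinear configuration $C^f$ with data $(\vec{x}'',\vec{n}'',r'')$; since every elementary control, hence every compound control built from them, is time-reversible --- if $\vec{u}$ on $[0,\tau]$ drives a well-separated trajectory from $A$ to $B$, then $t\mapsto -\vec{u}(\tau-t)$ drives the time-reversed, still well-separated, trajectory from $B$ to $A$ --- there is a control steering $C^f$ back to the prescribed target $(\vec{x}^f,\vec{y}_1^f,\vec{y}_2^f)$.

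The core of the proof is the \emph{connection stage}: steering $C^\circ$ to $C^f$. First, apply Proposition~\ref{prop:adjust_distances} to change the common active--passive distance from $r'$ to $r''$, keeping the active particle fixed and the configuration equidistant collinear. Then, if $\vec{x}'\neq\vec{x}''$, use Proposition~\ref{prop:rotatingparticles} --- which fixes the active particle and preserves both the distance and the equidistant collinear form --- to rotate the axis to a direction $\vec{n}^\ast$ orthogonal to $\vec{x}''-\vec{x}'$ (this is possible since the orthogonal complement of a nonzero vector in $\mathbb{R}^3$ is two-dimensional); next apply Proposition~\ref{prop:translatingparticles} with $\ell=\norm{\vec{x}''-\vec{x}'}$ and $\vec{e}_\perp=(\vec{x}''-\vec{x}')/\ell$ to translate all three particles by $\vec{x}''-\vec{x}'$, its hypothesis $\vec{e}_\perp\perp\vec{d}_1$ being guaranteed by the choice of $\vec{n}^\ast$; finally apply Proposition~\ref{prop:rotatingparticles} once more, now with the active particle at $\vec{x}''$, to rotate the axis from $\vec{n}^\ast$ to $\vec{n}''$. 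If instead $\vec{x}'=\vec{x}''$, a single application of Proposition~\ref{prop:rotatingparticles} sending $\vec{n}'$ to $\vec{n}''$ suffices. In either case the configuration is now exactly $C^f$.

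Concatenating the controls of the reduction stage, the connection stage, and the reversed reduction of the target yields $\vec{u}\in L^\infty([0,T];\mathbb{R}^3)$ with $T<\infty$: each stage consists of finitely many elementary moves of finite duration, and each invoked proposition already keeps its trajectory inside $\mathcal{S}_R^3$, so the spliced trajectory stays well separated. For existence and uniqueness of the solution, I would note that the matrix $\vec{H}$ in~\eqref{MM07} depends smoothly on the configuration on the open set where the three particles are distinct, in particular on $\mathcal{S}_R^3$, and the control is bounded and piecewise constant; hence Carath\'eodory's theorem yields a unique absolutely continuous solution on each subinterval, and these splice into a single solution on $[0,T]$ taking the prescribed endpoint values.

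I expect the main obstacle to be the connection stage, specifically the bookkeeping needed to make the hypotheses of Propositions~\ref{prop:rotatingparticles} and~\ref{prop:translatingparticles} line up: choosing the intermediate axis $\vec{n}^\ast$ so that the translation is admissible, handling the degenerate case $\vec{x}'=\vec{x}''$ separately, and confirming that no interparticle distance drops below $R$ during these moves --- though this last point is delegated to the individual propositions (e.g., Proposition~\ref{prop:translatingparticles} already subdivides long translations into short segments for exactly this reason). A secondary item to state carefully is the reversibility claim used to undo the reduction of the target configuration.
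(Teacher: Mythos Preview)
Your proposal is correct and follows the same architecture as the paper's proof: reduce both the initial and final configurations to equidistant collinear reference states via Propositions~\ref{prop:make_collinear}--\ref{prop:make_equidistant}, connect those states using Propositions~\ref{prop:rotatingparticles}--\ref{prop:adjust_distances}, and invoke time-reversibility to run the target-side reduction backward. Your connection stage is in fact spelled out more explicitly than the paper's (choosing $\vec{n}^\ast\perp(\vec{x}''-\vec{x}')$ to satisfy the hypothesis of Proposition~\ref{prop:translatingparticles}, handling $\vec{x}'=\vec{x}''$ separately), and your remarks on Carath\'eodory existence are a welcome addition the paper leaves implicit.
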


\begin{proof} We first describe three parts of the control strategy and then explain how they are combined.
    
    \noindent
    \emph{Part 1.}
    Using Propositions~\ref{prop:make_collinear} and~\ref{prop:make_equidistant}, we can bring the particles from the arbitrary initial configuration $(\vec{x}^\circ, \vec{y}_1^\circ,\vec{y}_2^\circ)$ to an intermediate configuration $(\tilde{\vec{x}}^\circ, \tilde{\vec{y}}_1^\circ,\tilde{\vec{y}}_2^\circ)$ that is equidistant and collinear. 

    \noindent
    \emph{Part 2.}
    Likewise, we can bring the particles from the final configuration $(\vec{x}_1^f, \vec{y}_1^f, \vec{y}_2^f)$ to a configuration $(\tilde{\vec{x}}^f, \tilde{\vec{y}}_1^f,\tilde{\vec{y}}_2^f)$ that is equidistant and collinear. 

    \noindent
    \emph{Part 3.}
    Using Propositions~\ref{prop:rotatingparticles},~\ref{prop:translatingparticles}, and~\ref{prop:adjust_distances}, the particles can be moved from $(\tilde{\vec{x}}^\circ, \tilde{\vec{y}}_1^\circ,\tilde{\vec{y}}_2^\circ)$ to $(\tilde{\vec{x}}^f, \tilde{\vec{y}}_1^f,\tilde{\vec{y}}_2^f)$.
    
    The sequence of steps in our complete strategy is, therefore: (i)  apply Part 1; (ii) apply Part 3; (iii) apply Part 2 in reverse. Indeed, by the time-reversibility property of the Stokes equation, we can reverse the control for Part 2 to bring particles from $(\tilde{\vec{x}}^f, \tilde{\vec{y}}_1^f,\tilde{\vec{y}}_2^f)$ to the given final configuration. This strategy takes inspiration from~\citep{dalmaso}.
    
    Each of the Parts 1--3 above can be achieved in finite time. Defining $T$ as the sum of these times, a control map $\vec{u}\in L^\infty([0,T];\mathbb{R}^3)$ can be constructed by concatenating the controls associated with the parts above. This control map steers the system from the initial conditions to the final conditions along a well-separated trajectory.
    The theorem is proved.
\end{proof}

\begin{theorem}[Controllability with $M=1$ passive particle]\label{221}
An active particle and a single passive particle can be moved from any well-separated initial configuration $(\vec{x}^\circ, \vec{y}^\circ)$ to any well-separated final configuration $(\vec{x}^f, \vec{y}^f)$ along a well-separated trajectory. That is, given $(\vec{x}^\circ, \vec{y}^\circ), (\vec{x}^f, \vec{y}^f) \in\mathcal{S}_R^2$\,, there exist
$T\in(0,+\infty)$ and a control map $\vec{u}\in L^\infty([0,T];\mathbb{R}^3)$ such that system \eqref{MM05} with $M=1$, namely,
\begin{equation*}
\begin{cases} 
\dot{\vec{x}}(t)=\vec{u}(t), \\[2mm]
\displaystyle \dot{\vec{y}}(t)=  \frac{3a}{4}\left(\frac{1}{\norm{\vec{d}(t)}}\One+\frac{1}{\norm{\vec{d}(t)}^3}\vec{d}(t)\otimes \vec{d}(t)\right){\vec{u}}(t),
\end{cases}
\end{equation*}
admits a unique solution $(\vec{x},\vec{y})\in AC([0,T];\mathcal{S}_R^{2})$ (depending on $\vec{u}$), such that $(\vec{x}(0),\vec{y}(0))=(\vec{x}^\circ,\vec{y}^\circ)$ and $(\vec{x}(T),\vec{y}(T))=(\vec{x}^f,\vec{y}^f)$.

\end{theorem}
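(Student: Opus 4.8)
The plan is to reduce the $M=1$ case to a simplified version of the strategy used for Theorem~\ref{222}, exploiting the fact that a single passive particle is easier to steer because there are no inter-passive-particle separation constraints to track. First I would establish a normal form: given an arbitrary well-separated configuration $(\vec{x},\vec{y})$, I claim we can bring it to a ``canonical'' configuration in which the passive particle lies at a fixed displacement $r^\star\vec{e}_2$ (with $r^\star>R$) from the active particle, with the active particle at a chosen reference point. This is the analogue of Parts~1 and~2 in the proof of Theorem~\ref{222}.

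The key steps, in order, are as follows. (a) \emph{Adjust the distance.} Starting from $(\vec{x}^\circ,\vec{y}^\circ)$, choose a right-handed frame with $\vec{e}_2$ along $\vec{d}^\circ=\vec{y}^\circ-\vec{x}^\circ$ and apply the second-order controls (as in Proposition~\ref{prop:adjust_distances}, but now for a single particle) to change $\norm{\vec{d}}$ from $r^\circ$ to the target value $r^\star$, using $\vec{u}_{\Delta t}^{\epsilon,[\vec{h}_2,[\vec{h}_1,\vec{h}_2]]}$ to move the passive particle radially and then $\vec{u}_{\Delta t}^{\epsilon,[\vec{h}_3,[\vec{h}_1,\vec{h}_3]]}$ to zero out any induced in-plane drift; the active particle does not move. (b) \emph{Reorient.} Apply the first-order control $\vec{u}_{\Delta t}^{\epsilon,[\vec{h}_1,\vec{h}_2]}$ repeatedly, as in Proposition~\ref{prop:rotatingparticles}, to rotate the passive particle about an axis through the (stationary) active particle until $\vec{d}=r^\star\vec{e}_2$ in a fixed global frame; this preserves $\norm{\vec{d}}$. (c) \emph{Translate.} Use the zeroth-order control together with the reorientation move to translate the rigid pair $(\vec{x},\vec{y})$ to bring the active particle to its desired location. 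The single-particle analogue of Proposition~\ref{prop:translatingparticles} is in fact simpler: move the active particle in direction $\vec{e}_1\perp\vec{d}$ by $\vec{u}_{\Delta t}^{\ell/2,\vec{h}_1}$, rotate the passive particle by $\pi$ about the $\vec{e}_2$-axis through the new active-particle position using step (b), then move the active particle by $\vec{u}_{\Delta t}^{\ell/2,\vec{h}_1}$ again; by the symmetry/time-reversal argument the passive particle ends up displaced by exactly $\ell\vec{e}_1$, so the pair has been rigidly translated, with $\norm{\vec{d}}$ restored to $r^\star$. Composition of translations in two independent perpendicular directions plus motion along $\vec{d}$ itself reaches any target position of the active particle. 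Finally, run the analogue of the above (distance adjustment and reorientation) in reverse starting from $(\vec{x}^f,\vec{y}^f)$, invoking time-reversibility of Stokes flow, and concatenate: total time $T$ is the sum of the finitely many finite-time segments, and the control $\vec{u}\in L^\infty([0,T];\mathbb{R}^3)$ is the concatenation.

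Existence and uniqueness of the solution on each segment is standard: on well-separated configurations the right-hand side of system~\eqref{MM05} with $M=1$ is smooth (indeed analytic) in $(\vec{x},\vec{y})$ and bounded with bounded derivatives, so the Carath\'eodory/Pic--Lindel\"of theorem applies to each piecewise-constant control segment; well-separatedness is maintained by the same monotonicity observations used in the propositions (radial moves and rotations keep $\norm{\vec{d}}$ bounded below, and the translation step only increases $\norm{\vec{d}}$ during its first half-stage, which can be subdivided into $N_\ell$ shorter legs exactly as in the proof of Proposition~\ref{prop:translatingparticles} if needed). The main obstacle, as in Theorem~\ref{222}, is bookkeeping rather than a deep difficulty: one must check that the reorientation and translation moves can always be arranged so that the trajectory never leaves $\mathcal{S}_R^2$, and that the finitely many applications of the elementary moves with suitably chosen amplitudes $\epsilon$ land \emph{exactly} on the desired intermediate configurations (this uses the continuity/intermediate-value arguments already invoked in Propositions~\ref{prop:make_collinear}--\ref{prop:adjust_distances}, since each elementary move depends continuously on $\epsilon$ and the net effect of a sequence can be tuned to hit any nearby target). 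Since the single-passive-particle dynamics is a strict sub-case of the two-particle dynamics, none of these points presents any new conceptual obstacle; the proof is essentially a streamlined rerun of the proof of Theorem~\ref{222}.
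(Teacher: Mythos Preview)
Your proposal is correct but takes a genuinely different route from the paper. The paper's proof is more direct and does \emph{not} pass through a canonical form or use the rigid-translation move at all: it (1) rotates the passive particle about $\vec{x}^\circ$ onto a chosen axis, (2) drives the active particle straight from $\vec{x}^\circ$ to $\vec{x}^f$ with the zeroth-order control (letting the passive particle lag along that same axis), (3) rotates the passive particle about $\vec{x}^f$ onto the axis through $\vec{y}^f$, and (4) adjusts the radial distance. In particular, the paper never invokes the single-particle analogue of Proposition~\ref{prop:translatingparticles}; the passive particle is allowed to drift during the active-particle translation and is simply corrected afterwards by a rotation and a distance adjustment. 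Your approach is instead the natural specialization of the $M=2$ strategy of Theorem~\ref{222}: reduce both endpoints to a common canonical configuration via distance-adjust, reorient, and rigid-translate, then concatenate using time-reversibility. This buys conceptual uniformity with the $M=2$ case at the cost of more moves; the paper's four-step route is shorter. One small point to tighten in your sketch: ``motion along $\vec{d}$ itself'' in step~(c) is not directly available from the Proposition~\ref{prop:translatingparticles} analogue, which only translates perpendicular to $\vec{d}$; you would need to interleave a reorientation (your step~(b)) to realize translation in the $\vec{e}_2$-direction. This is easily done with the tools you already list, but should be stated explicitly.
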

\begin{proof}
We provide a constructive proof of controllability, which is achieved by an appropriate composition of the zeroth-, first-, and second-order controls. We apply the propositions from section \ref{sec:compound_moves}, which concerned systems with two passive particles; by simply neglecting the second passive particle, those propositions describe possible moves for an active particle and a single passive particle. 

Let us denote by $\Pi^\circ$ the plane containing $\vec{x}^\circ$,~$\vec{y}^\circ$, and~$\vec{x}^f$, and let us notice that it is not restrictive to assume that $\vec{x}^\circ$ is the origin. We may also choose the reference frame such that $\vec{e}_3$ is perpendicular to $\Pi^\circ$ and  $\vec{x}^f=x^f\vec{e}_1$ lies on the positive $x$-axis (see Figure~\ref{fig:one_active_one_passive}).

\smallskip

\noindent\emph{Step 1: rotation about $\vec{x}^\circ$}. 
By Proposition~\ref{prop:rotatingparticles}, the passive particle can be rotated about $\vec{x}^\circ$ to lie on the negative $\vec{e}_1$ axis (see Figure~\ref{fig:one_active_one_passive}(a)); its position at the end of this step will be $\hat{\vec{y}}=-d^\circ\vec{e}_1$, where $d^\circ=\norm{\vec{y}^\circ}$.

\smallskip

\noindent\emph{Step 2: translation of active particle}. Using the direct control $\vec{u}(t)=\vec{u}^{x^f,\, \vec{h}_1}_1(t) = \vec{x}^f-\vec{x}^\circ$, we move the active particle from~$\vec{x}^\circ$ to~$\vec{x}^f$. The passive particle moves in the $\vec{e}_1$ direction to $\bar{\vec{y}}$. Since the velocity of the active particle is greater than that of the passive particle, the particles remain well separated during this step.

\smallskip

\noindent\emph{Step 3: rotation about $\vec{x}^f$}. 
We consider the plane~$\Pi^f$ containing~$\bar{\vec{y}}$,~$\vec{x}^f$, and~$\vec{y}^f$, and change the reference frame, using orthonormal vectors $\vec{e}_1^f = \vec{d}^f/\norm{\vec{d}^f}$ and $\vec{e}_2^f$ in this plane. By Proposition~\ref{prop:rotatingparticles}, we can rotate the passive particle around the active one until it reaches a position $\bar{\vec{y}}^f=\bar{d}\vec{e}_1^f$ on the positive $x$-axis analogously to Step 1. 

\smallskip

\noindent\emph{Step 4: translation of passive particle (distance adjustment)}. Note that $\vec{y}^f$ and $\bar{\vec{y}}^f$ both lie on the positive $\vec{e}_1$-axis relative to the active particle $\vec{x}^f$. By Proposition~\ref{prop:adjust_distances}, we can adjust the distance between the active and passive particle to achieve the desired final configuration $(\vec{x}^f, \vec{y}^f)$. 

\smallskip

\noindent\emph{Conclusion of the proof}. Each of the steps 1--4 above can be achieved in finite time. Defining $T$ as the sum of these times, a control map $\vec{u}\in L^\infty([0,T];\mathbb{R}^3)$ can be constructed by concatenating the controls associated with the steps above. This control map steers the system from the initial conditions to the final conditions along a well-separated trajectory.
\begin{figure}
    \centering
    \includegraphics[width=0.75\textwidth]{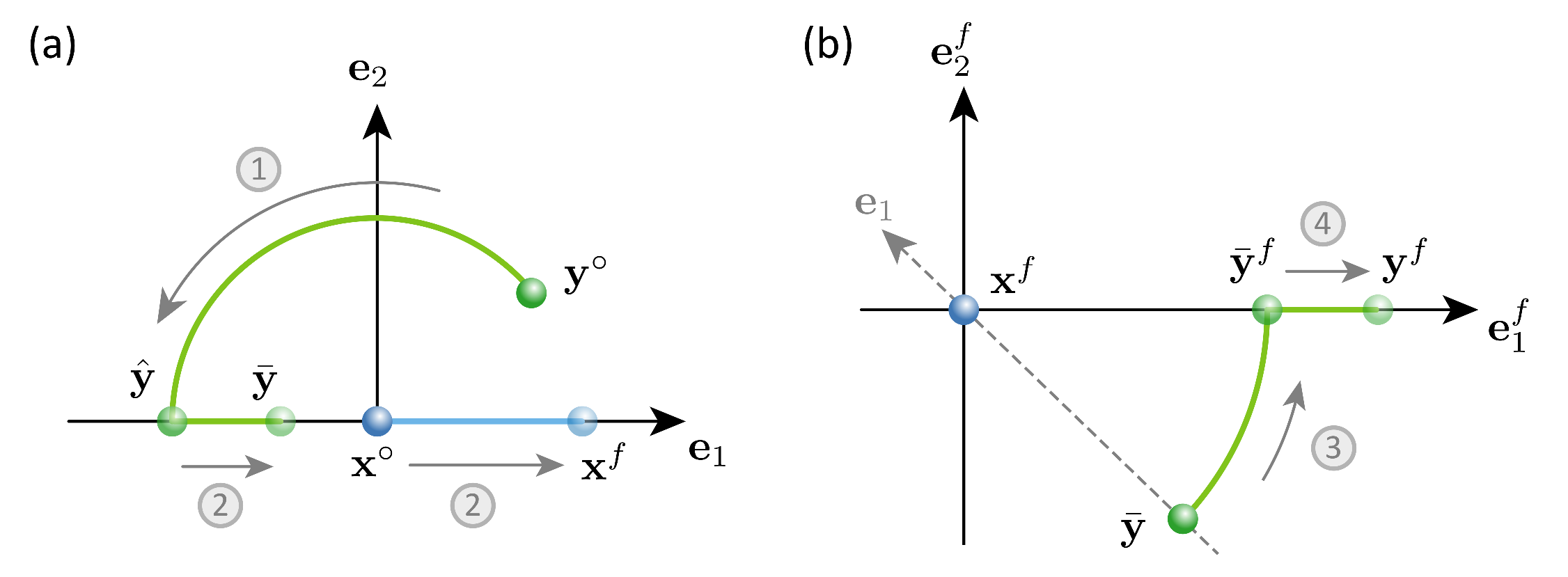}
    \caption{A schematic illustration of the four steps for moving one active and one passive particle from arbitrary initial positions $\vec{x}^\circ$, $\vec{y}^\circ$ to arbitrary final positions $\vec{x}^f$, $\vec{y}^f$. Steps 1 and 2 are shown in (a) while steps 3 and 4 are shown in (b) with a change of reference frame.
    \label{fig:one_active_one_passive}}
\end{figure}
\end{proof}

\section{Errors due to finite amplitudes and separations}\label{sec:errors}
In the control strategies for compound moves and the general controllability theorems of Section~\ref{sec:controllability_M_1_2}, we used the far-field hydrodynamic flow field associated with a moving particle and we used Lie brackets to generate the necessary directions of motion of the passive particles in the asymptotic limit $\varepsilon \to 0$. Since the displacement per cycle decreases as~$\varepsilon$ decreases, it may be preferable in practice to use a relatively large value of~$\varepsilon$. In this section, we present numerical results for solutions of system~\eqref{MM05}, applying the first- and second-order controls to achieve a fixed target angular or linear displacement of a single passive particle with various values of~$\varepsilon$. We characterize the error between the intended exact (target) displacement and the numerically computed displacement with finite~$\varepsilon$. Numerical trajectories were obtained using the \texttt{solve\_ivp} function with the RK45 ODE solver from the SciPy Python library. Additionally, we characterize the differences in displacements using the far-field approximation~\eqref{MM05} compared with applying the same controls to system~\eqref{MM01}, which includes the $O(1/r^3)$ potential dipole in the velocity field, represented by the $\nabla^2\mathcal{G}$ term in equation~\eqref{MM02}. We intentionally consider an initial separation that is only a few times the particle diameter and in the following two subsections, we show that the leading-order expressions based on far-field hydrodynamics from Section~\ref{subsec:Lie_brackets} give good estimates for the particle displacements; we expect that errors would be reduced if particles are further apart. 

\subsection{Angular displacements}
To characterize the errors associated with finite $\varepsilon$ when rotating a passive particle about the active particle, we consider a target angular displacement of $\theta=\pi/6$ about the $z$-axis for a passive particle initially at position $\vec{y}^\circ = (5, 0, 0)^\top$ and an active particle of radius $a=1$ at the origin. The target position for the passive particle is, accordingly, $\vec{y}_\mathrm{target} = (5\sqrt{3}/2, 5/2, 0)^\top$.

For a range of choices of integers $N$, we use formula~\eqref{Deltatheta} to define corresponding values of $\varepsilon$ such that each of $N$ applications of the first-order control $\vec{u}_{\Delta t}^{\varepsilon , \, [\vec{h}_1, \vec{h}_2]}$ is expected to produce a rotation by the angle $\Delta\theta^\varepsilon = \theta/N$ for the relevant values $a=1, r=5$. We then numerically solve system~\eqref{MM05} for $N$ applications of the control, obtaining the final position $\vec{y}^\varepsilon_\mathrm{num}$. 

Equation~\eqref{Deltatheta} neglects terms of order $\varepsilon^3$ from equation~\eqref{eq:first-order_displacement}. Hence, we can expect that the displacement deviates from the desired motion. By symmetry, the exact displacement of the passive particle has zero component in the $z$-direction, up to machine precision, even considering higher-order terms. The two components of interest are the error in the angular (polar) displacement, which in our case is
\begin{equation*}
    \eta_\theta^\varepsilon = \frac{1}{\theta} \arccos\left(\displaystyle\frac{\vec{y}^\varepsilon_\mathrm{num}}{\norm{\vec{y}^\varepsilon_\mathrm{num}}}\cdot\frac{\vec{y}_\mathrm{target}}{\norm{\vec{y}_\mathrm{target}}}\right),
\end{equation*}
and the radial error, which we define as 
\begin{equation*}
    \eta_r^\varepsilon = \frac{\left| \norm{\vec{y}^\varepsilon_\mathrm{num}}-r^\circ \right|}{r^\circ}.
\end{equation*}

The convergence of these two components of error is shown in Figure~\ref{fig:first_order_convergence} (labelled ``one-cycle polar'' and ``one-cycle radial'' in the legend). Both components of error appear to converge to zero with first-order rate of convergence with $\varepsilon$, which we could anticipate since equation~\eqref{Deltatheta} neglects terms of order $\varepsilon^3$ and the number of required iterations $N$ scales as $1/\varepsilon^2$.

We can, however, improve the rate of convergence by modifying the first-order control. The general first-order control~\eqref{eq:squarecontrol} applied with any of the following $(k,l)$ combinations of indices---$(1,2), (-1, -2), (2,-1), (-2,1)$---all result in the same leading-order term in the displacement~\eqref{eq:first-order_displacement}, but the higher-order terms differ. Here, a negative index indicates that we apply the control in the negative direction. As shown in Figure~~\ref{fig:first_order_convergence}, we find that a strategy of alternating between the first two choices, which we refer to as a two-cycle control, produces errors that decay quadratically with $\varepsilon$. A four-cycle strategy, in which we cycle through all four of the listed pairs of $(k,l)$, results in radial errors that decay cubically and polar errors that decay quadratically with $\varepsilon$. Note that values of $\varepsilon$ as large as 1 can be used for rotations with errors of $10^{-3}$ or less. The paths of the active and passive particles over one application of the four-cycle control are illustrated in Figure~\ref{fig:1_2_four_cycle}.

When we include the potential dipole term in the flow field, we find that the radial error is indistinguishable from the case without the potential dipole. In contrast, the polar error does not converge to zero with the four-cycle strategy but it remains below $10^{-2}$ for $\varepsilon < 1$.

We assert that it is more important to reduce the radial error than it is to guarantee a small polar error because we can easily adjust $\varepsilon$ or the number of iterations to compensate for errors in the angular component, whereas we require a different type of control to correct for changes in radial distance. In particular, the second-order control could generate a corrective radial displacement for one passive particle but this may not be able to correct the trajectories of two passive particles simultaneously.

Note that we considered particles that are initially close together, with $\norm{\vec{y}^\circ - \vec{x}^\circ} = 5a$. This was chosen to illustrate that even when the far-field regime is not strictly observed, the first-order control is an effective strategy for achieving circular motion of a passive particle around an active one. Radial deviations are small and angular displacements are well approximated by the leading order expression given by equation~\eqref{Deltatheta}. The analytical formula could be modified to include the effect of the potential dipole if greater accuracy were required.

\begin{figure}
    \centering
    \includegraphics[width=0.75\textwidth]{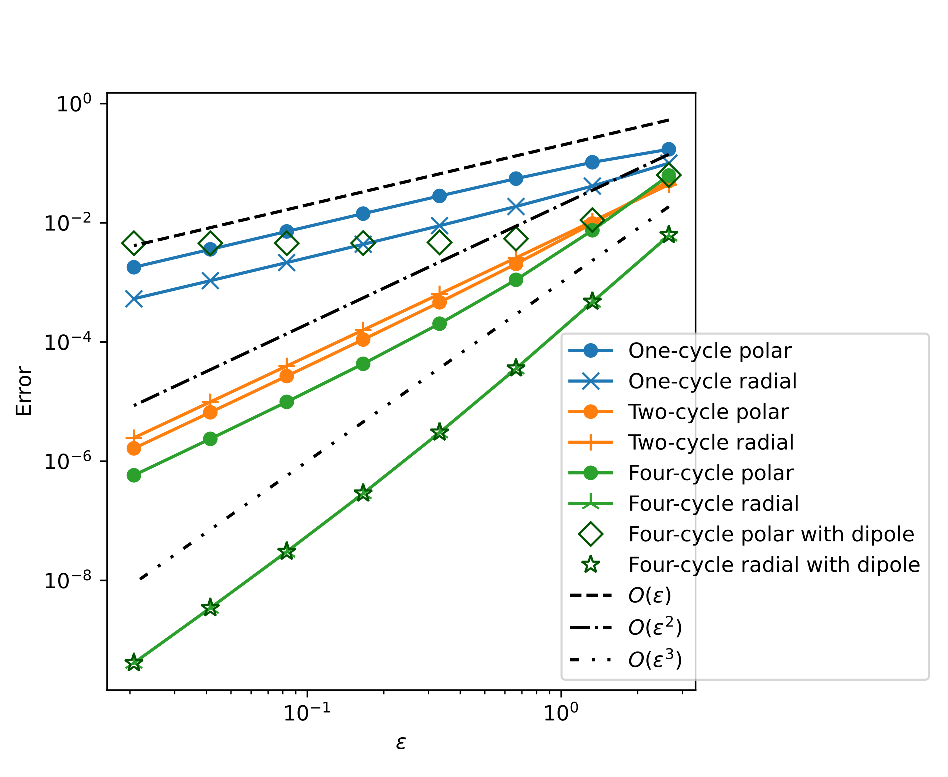}
    \caption{Convergence of the passive particle displacement with the number of applications of the first-order control corresponding to $[\vec{h}_1,\vec{h}_2]$ for a fixed target rotation by the angle $\pi/6$ about the origin.
    \label{fig:first_order_convergence}}
\end{figure}

\begin{figure}
    \centering
    \includegraphics[width=0.75\textwidth]{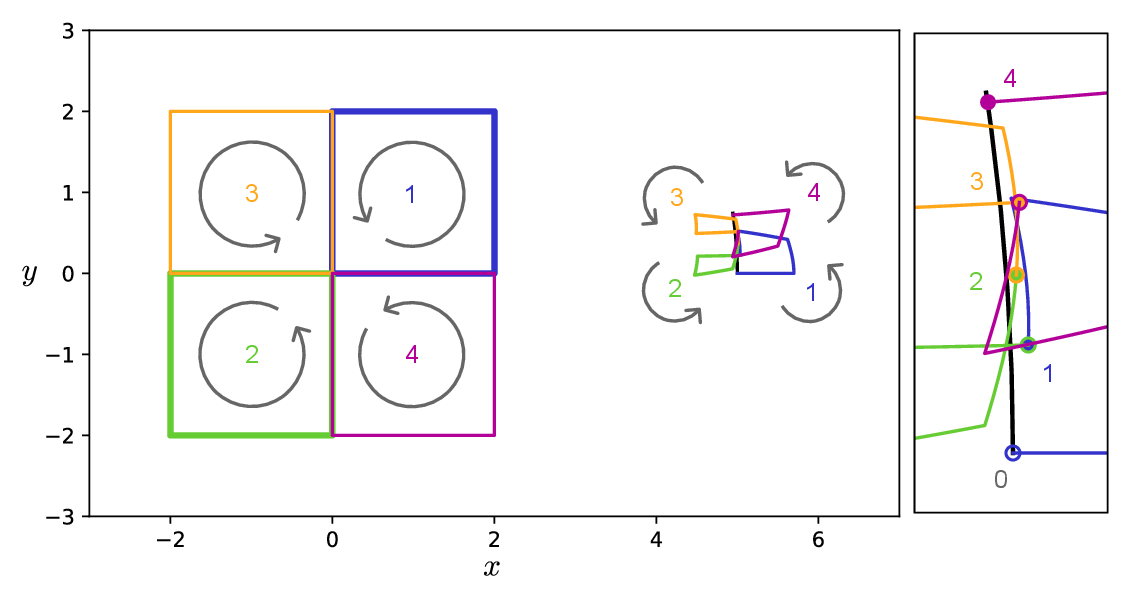}
    \caption{A multi-cycle control corresponding to the first-order Lie bracket $[\vec{h}_1,\vec{h}_2]$. Initially, the active particle is at $\vec{x}^\circ = (0,0,0)^\top$ and the passive particle is at $\vec{y}^\circ = (5,0,0)^\top$. The panel on the right is a magnification of the path of the passive particle. The two-cycle consists of portions 1 and 2, whereas the four-cycle consists of portions 1 to 4, in sequence. The amplitude of the control is $\varepsilon=2$.
    \label{fig:1_2_four_cycle}}
\end{figure}

\subsection{Linear displacements}
Errors for the second-order control are analyzed similarly, using a target displacement from $\vec{y}^\circ = (5,0,0)^\top$ to $\vec{y}_\mathrm{target}=(5.1,0,0)^\top$. For a given number $N$ of applications of the second-order control, we numerically determine the value of $\varepsilon$ that would result in the target displacement according to equation~\eqref{translation_displacement}, noting that $r$ in this equation changes with each application of the control. The polar component of error is defined as 
\begin{equation*}
    \eta_\theta^\varepsilon = \arccos\left(\frac{y^\varepsilon_{\mathrm{num}, 1}}{\norm{\vec{y}^\varepsilon_\mathrm{num}}}\right),
\end{equation*}
and the radial component of error for this motion is
\begin{equation*}
    \eta_r^\varepsilon = \frac{\left| y^\varepsilon_{\mathrm{num},1}-y_{\mathrm{target},1} \right|}{\left|y_{\mathrm{target},1}-y^\circ_1\right|} = 10\left| y^\varepsilon_{\mathrm{num},1}-y_{\mathrm{target},1} \right|.
\end{equation*}

Following the description in Section~\ref{sec:translation_control}, the one-cycle second-order control is obtained by applying the general pattern~\eqref{eq:squarecontrol} with $\vec{h}_k$ corresponding to the first-order control $[\vec{h}_1,\vec{h}_2]$ and $\vec{h}_l = -\vec{h}_2$. The two-cycle control alternates between this and the control with $\vec{h}_k$ corresponding to the first-order control $[\vec{h}_1,-\vec{h}_2]$ and $\vec{h}_l = \vec{h}_2$. The motion of the two particles generated by the two-cycle control is illustrated in Figure~\ref{fig:1_2_-2_two_cycle}.

As shown in Figure~\ref{fig:second_order_convergence}, the errors in the radial (linear) direction are similar for the one- and two-cycle controls over the range of $\varepsilon$ considered, decaying approximately linearly with $\varepsilon$. Polar errors decay quadratically with $\varepsilon$ with the one-cycle strategy and quintically for the two-cycle strategy. Both polar and radial errors are essentially unchanged when the potential dipole terms are included, as shown in Figure~\ref{fig:second_order_convergence}.

Since polar errors decay rapidly as $\varepsilon$ decreases, we can readily achieve linear motion of a passive particle using the second-order control. The errors in the radial direction, which are larger in magnitude than those in the polar direction, can be corrected either by considering higher-order terms in the analytic expression for the displacement or by reducing $\varepsilon$, perhaps incrementally as the target position is approached.

\begin{figure}
    \centering
    \includegraphics[width=0.75\textwidth]{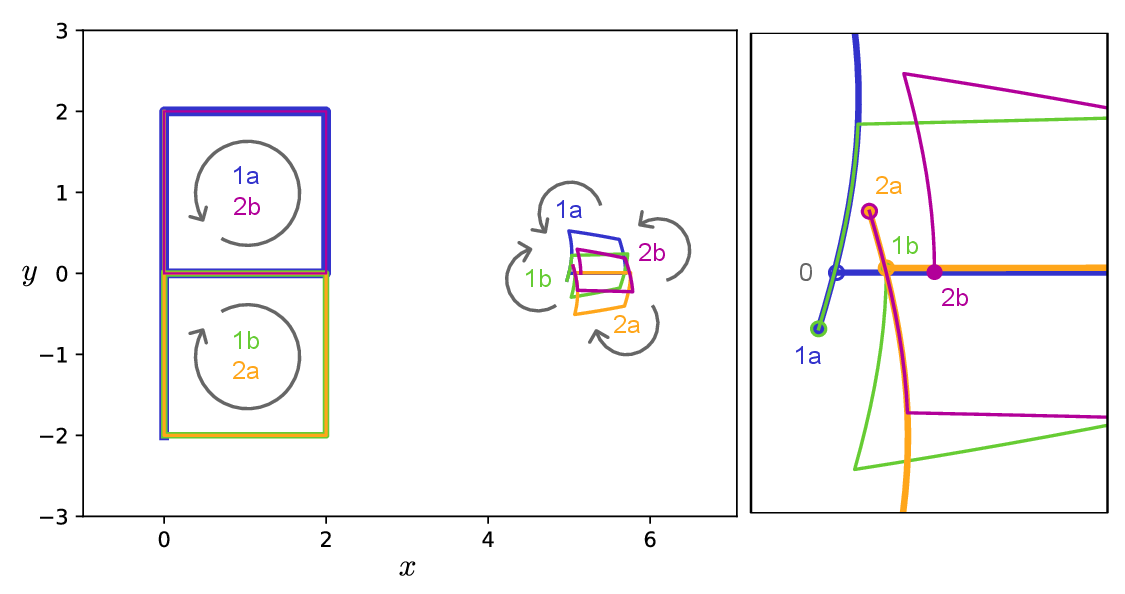}
    \caption{A multi-cycle control corresponding to the second-order Lie bracket $[[\vec{h}_1,\vec{h}_2],-\vec{h}_2]=[\vec{h}_2,[\vec{h}_1,\vec{h}_2]]$. Initially, the active particle is at $\vec{x}^\circ = (0,0,0)^\top$ and the passive particle is at $\vec{y}^\circ = (5,0,0)^\top$. The panel on the right is a magnification of the path of the passive particle. The direction of travel along the curves is indicated by arrows and the portions of the cycles correspond to applying the controls: (1a) $\vec{u}^{\epsilon,\, [\vec{h}_1,\vec{h}_2]}$, (1b) $\vec{u}^{\epsilon,\, -\vec{h}_2}$, (2a) $\vec{u}^{\epsilon,\, [\vec{h}_1,-\vec{h}_2]}$, and (2b) $\vec{u}^{\epsilon,\, \vec{h}_2}$ all with $\varepsilon=2$.
    \label{fig:1_2_-2_two_cycle}}
\end{figure}

\begin{figure}
    \centering
    \includegraphics[width=0.75\textwidth]{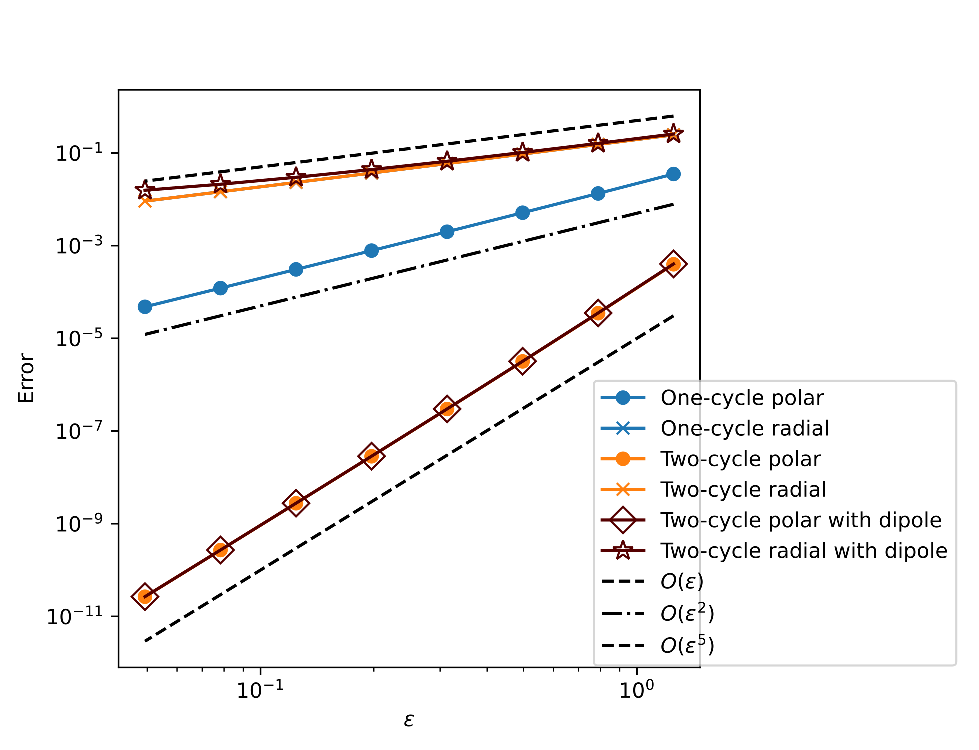}
    \caption{Convergence of the passive particle displacement with the control amplitude $\varepsilon$ for repeated applications of second-order controls corresponding to $[\vec{h}_2,[\vec{h}_1,\vec{h}_2]]$ for a fixed target displacement from $\vec{y}^\circ=(5,0,0)^\top$ to $\vec{y}_\mathrm{target} = (5.1,0,0)^\top$.
    \label{fig:second_order_convergence}}
\end{figure}

\section{Conclusions}\label{sec:conclusions}

In this paper, we have presented the motion planning problem for a system of one active and~$M$ passive spherical particles immersed in a viscous fluid, in the far-field approximation. We explicitly constructed elementary moves that, suitably concatenated, resulted in strategies to achieve total controllability in the specific cases $M=1$ and $M=2$. Moreover, the strategies we proposed ensure that the particles can maintain an arbitrary minimum separation compatible with their initial and final configurations. The elementary and compound moves were expressed in terms of zeroth-, first-, and second-order controls characterized by an amplitude parameter~$\varepsilon$, with asymptotic expressions valid in the limit $\varepsilon \to 0$. We showed that in this limit, the passive particle displacements resulting from the zeroth-, first-, and second-order controls correspond to the Stokeslet, rotlet, and rotlet dipole singularity solutions of Stokes flow, respectively. Higher-order singularities can be generated by extension of the controls to higher orders.

Through numerical solutions of the equations of motion, we demonstrated that the two key components of our motion planning strategy, namely, moving passive particles in a circular orbit around an active one and translating a passive particle without a net displacement of the active particle, could be achieved to a high accuracy even with $\varepsilon \approx 1$ and with the particles as close as a few diameters apart.

This research contributes to the growing literature on ensembles of microparticles subject to hydrodynamic interactions in low Reynolds number flows, including the possibly chaotic behavior of sedimenting particles~\citep{Hocking_1964,Janosi}, mixing and transport in suspensions of microswimmers~\citep{lauga_hydrodynamics_2009, katija_viscosity-enhanced_2009, Pushkin2013}, idealized models of swimmers such as Purcell's scallop or three-link swimmers~\citep{Purcell1977}, and three linked spheres~\citep{NajafiGolestanian}. Mathematical treatments of control of model swimmers started with the seminal paper~\cite{shapere_geometry_1989} and have since been applied in many contexts, such as~\cite{ADSL,chambrion_generic_2012,dalmaso,chambrion_optimal_2019,loheac_controllability_2020,ZMBG,AZN}

The present contribution sets the basis for further investigations from multiple viewpoints. Four areas of future research that could be of interest to the mathematical, physical, and engineering communities are:
\begin{description}
    \item[(1) ] Using periodic controls for the active particles to produce flow fields that act as hydrodynamic traps~\citep{lutz_hydrodynamic_2006,chamolly_irreversible_2020}. Rather than moving a passive particle from one specific location to another, we may want to attract all nearby particles to a target and hold them there, possibly against other effects such as a background flow or gravity. 
    \item[(2) ] Considering cases in which we have~$N$ active particles and~$M$ passive ones, with both $N$ and $M$ large. Generalizing the formulation~\eqref{MM07} to arbitrary numbers of active and passive particles is relatively straightforward but the task of effectively controlling $M$ passive particles with a minimal number of active particles is challenging. It could also be of interest to investigate whether partial controllability results can be proved for an even lower number of controllers.
    We stress that, even in the case $N=1$ and $M=3$, the strategies proposed in our proofs would have to be substantially modified, since the presence of a third passive particle disrupts the symmetry that has been exploited in some of the moves. For example, the strategy used in Proposition~\ref{prop:translatingparticles} (translating a group of equidistant collinear particles) does not work even if the particles are not all required to be collinear, because we cannot guarantee that the symmetry is preserved for the third passive particle.
    \item[(3) ] Mixing fluids at low Reynolds number. This is known to be challenging in microfluidic devices~\citep{ward_mixing_2015}; one proposed technique involves using magnetic particles in rotating magnetic fields~\citep{Munaz2017}, which corresponds to our model of actively actuated particles but with applied torques and rotations of the active particles playing significant roles.
    \item[(4) ] Accounting for stochastic terms (i.e., Brownian motion) in the dynamics of the passive particles~\citep{Graham2018}. In the current work, we assumed that particles were large enough that Brownian motion could be neglected but this may not be valid if the particles are small (more precisely, when the P\'{e}clet number is small). Including random diffusion effects would be particularly interesting when the number of passive particles is very large (ideally, diverging to infinity), to the point that a description in terms of the distribution of particles would be preferable. In this context, it is customary to study the PDE arising for the limiting distribution, which, in this context, is expected to be a Fokker--Planck-type equation featuring a transport term coming from the action of the active particles, with the diffusion term resulting from the Brownian motion. While this is a very promising and fertile research field, it is beyond the scope of the present paper.
\end{description}

\bigskip

\backsection[Acknowledgements]{
The authors thank AnhadSingh Bagga for assistance with some of the figures.
MZ is a member of the \emph{Gruppo Nazionale di Fisica Matematica} of the \emph{Istituto Nazionale di Alta Matematica}. MM is a member of the \emph{Gruppo Nazionale per l'Analisi Matematica, la Probabilit\`{a} e le loro Applicazioni} of the \emph{Istituto Nazionale di Alta Matematica}.
This study was carried out within the GNAMPA2024 project \emph{Analisi asintotica di modelli evolutivi di interazione} (CUP E53C23001670001).}

\backsection[Funding]{
HS and MA acknowledge the support of the Natural Sciences and Engineering Research Council of Canada (NSERC), [funding reference number RGPIN-2018-04418].

\noindent Cette recherche a \'{e}t\'{e} financ\'{e}e par le Conseil de recherches en sciences naturelles et en g\'{e}nie du Canada (CRSNG), [num\'{e}ro de r\'{e}f\'{e}rence RGPIN-2018-04418].

\noindent Funding from the \emph{Mathematics for Industry 4.0} 2020F3NCPX PRIN2020 (MM and MZ) funded by the Italian MUR,  the  \emph{Geometric-Analytic Methods for PDEs and Applications} 2022SLTHCE (MM) and the \emph{Innovative multiscale approaches, possibly based on Fractional Calculus, for the effective constitutive modeling of cell mechanics, engineered tissues, and metamaterials in Biomedicine and related fields} P2022KHFNB (MZ) projects funded by the European Union -- Next Generation EU  within the PRIN 2022 PNRR program (D.D. 104 - 02/02/2022) is gratefully acknowledged. 
This manuscript reflects only the authors’ views and opinions and the Ministry cannot be considered responsible for them.}

\backsection[Declaration of interests]{
The authors report no conflict of interest.}

\backsection[Author ORCIDs]{
H.~Shum, https://orcid.org/0000-0002-5385-1568;
M.~Zoppello, https://orcid.org/0000-0001-6659-4268; 
M.~Astwood, https://orcid.org/0000-0002-8830-3852; 
M.~Morandotti, https://orcid.org/0000-0003-3528-6152.}

\bibliographystyle{jfm}

\end{document}